\def\isdraft{0}
\tikzset{every state/.style={minimum size=0pt}}
\newtheorem{theorem}{Theorem}
\newtheorem{corollary}[theorem]{Corollary}
\newtheorem{lemma}[theorem]{Lemma}
\newtheorem{proposition}[theorem]{Proposition}
\theoremstyle{definition} 
\newtheorem{definition}[theorem]{Definition}
\newtheorem{example}[theorem]{Example}
\newtheorem{remark}[theorem]{Remark}
\newcommand{\boxblacktriangle}{\mathrel{\ooalign{$\square$\cr\kern0.07ex\hbox{\scalebox{0.9}{$\blacktriangle$}}}}}
\newcommand{\boxtriangle}{\mathrel{\ooalign{$\square$\cr\kern0.07ex\hbox{\scalebox{0.9}{$\triangle$}}}}}
\newcommand{\righttherefore}{:\joinrel\cdot\,}
\title{
	Logic-based analogical proportions
}
\author{
	Christian Anti\'c
}
\address{
	christian.antic@icloud.com\\
	Vienna University of Technology\\
	Vienna, Austria
}
\begin{document}
\begin{abstract} 
	The author has recently introduced an abstract algebraic framework of analogical proportions within the general setting of universal algebra. The purpose of this paper is to lift that framework from universal algebra to the strictly more expressive setting of full first-order logic. We show that the so-obtained logic-based framework preserves all desired properties and we prove novel results in that extended setting.
	\todo[inline]{Erwaehne, dass das ein signifikanter Fortschritt zu state of the art ist, da erstes FW in FOL}
\end{abstract}
\maketitle

\section{Introduction}\label{§:I}




The author has recently introduced an abstract algebraic justification-based framework of analogical proportions of the form ``$a$ is to $b$ what $c$ is to $d$'' --- written $a:b::c:d$ --- in the general setting of universal algebra \cite{Antic22}. It has been applied to logic program synthesis by analogy \cite{Antic23-23} (which can be intrepreted as a form of analogical proportions between first-order Horn theories \cite{Herzig24}), and it has been studied in the context of boolean \cite{Antic21-3} and monounary algebras \cite{Antic22-2}.

The \textbf{purpose of this paper} is to lift that model from universal algebra to full first-order logic motivated by the fact that some reasoning tasks necessarily involve quantifiers and relations. The entry point is the \textit{logical} interpretation of an analogical proportion in \cite[§6]{Antic22}. The task of turning that limited logical interpretation --- restricted to so-called rewrite formulas representing rule-like justifications --- into a full-fledged logical description of analogical proportions turns out to be non-trivial due to the observation that a naive extension easily leads to an over-generalization with too many elements being in proportion: for example, consider the structure $(\mathbb N,S,0)$, where $S$ is the successor function; in this structure, we can identify every natural number $a$ with the numeral $\underline a:=S^a0$; given some natural numbers $a,b,c,d\in \mathbb N$, the formula
\begin{align*} 
	\alpha(x,y):\equiv (x=\underline a\land y=\underline b)\lor (x=\underline c\land y=\underline d)
\end{align*} would be a characteristic justification (see \prettyref{§:AP}) of
\begin{align*} 
	a:b::c:d
\end{align*} since
\begin{align*} 
	(\mathbb N,S,0) \models \alpha(a',b') \quad\text{and}\quad (\mathbb N,S,0) \models \alpha(c',d')
\end{align*} holds iff
\begin{align*} 
	(a'=a,\quad b'=b,\quad c'=c,\quad d'=d) \quad\text{or}\quad (a'=c,\quad b'=d,\quad c'=a,\quad d'=b).
\end{align*}

The \textbf{main challenge} therefore is to find an appropriate fragment of first-order logic expressing the relationship between two given elements. This is achieved in this paper via the notion of ``connected formulas'' (see \prettyref{§:AP}). From that point on, the paper is similar to \cite{Antic22} in spirit but different on a technical level due to the more expressive justifications possibly containing relation symbols and conjunctions of atoms. 

In \prettyref{§:Properties}, we show that the extended framework of this paper preserves all the desirable properties proved in \cite[Theorem 28]{Antic22} (based on Lepage's \cite{Lepage03} axiomatic approach) and thus coincides in that respect with the original framework. After that, some results are lifted to the new setting, most importantly the Isomorphism Theorems of \prettyref{§:ITs}. 

In \prettyref{§:EPT}, we present an interesting new result linking \textit{equational dependencies} and analogical proportions. It should be emphasized that these kind of results cannot be shown in the previous version of the framework due to the inability of explicitly representing equations via rewrite justifications. 

In \prettyref{§:EF}, we reprove the Difference Proportion Theorem in \cite{Antic22-2}, stating that in the structure $(\mathbb N,S)$ consisting of the natural numbers and the unary successor function we have
\begin{align*} 
	a:b::c:d \quad\Leftrightarrow\quad a-b=c-d,
\end{align*} in a fragment consisting only of justifications of an equational form.

In \prettyref{§:Graphs}, we study graphs within the path fragment consisting only of path justifications of a specific form encoding path lengths.


In a broader sense, this paper is a further step towards an algebro-logical theory of analogical reasoning.

\section{Preliminaries}

We recall the syntax and semantics of first-order logic by mainly following the lines of \cite[§2]{Hinman05}.

\subsection{Syntax}

A (\textit{\textbf{first-order}}) \textit{\textbf{language}} $L$ consists of a set $Rs_L$ of \textit{\textbf{$L$-relational symbols}}, a set $Fs_L$ of \textit{\textbf{$L$-function symbols}}, a set $Cs_L$ of \textit{\textbf{$L$-constant symbols}}, and a function $r:Fs_L\cup Rs_L\to \mathbb N$. The sets $Rs_L$, $Fs_L$, and $Cs_L$ are pairwise disjoint, and members of $Rs_L\cup Fs_L\cup Cs_L$ are called the \textit{\textbf{non-logical symbols}} of $L$.  Additionally, every language has the following distinct \textit{\textbf{logical symbols}}: a denumerable set $X$ of \textit{\textbf{variables}}, the \textit{\textbf{equality symbol}} $=$, the \textit{\textbf{connectives}} $\neg$, $\lor$, and $\land$, and the \textit{\textbf{quantifiers}} $\exists$ and $\forall$.

An \textit{\textbf{$L$-atomic term}} is either a constant symbol or a variable of $L$. An \textit{\textbf{$L$-term}} is defined inductively as follows:
\begin{itemize}
	\item every $L$-atomic term is an $L$-term,
	\item for any function symbol $f$ and any $L$-terms $t_1,\ldots,t_{r(f)}$, $f(t_1,\ldots,t_{r(f)})$ is an $L$-term.
\end{itemize} We denote the set of variables occurring in a term $t$ by $X(t)$.\todo{needed?} The \textit{\textbf{rank}} of a term is given by the number of its variables.

An \textit{\textbf{$L$-atomic formula}} has one of the following forms:
\begin{itemize}
	\item $s=t$, for $L$-terms $s,t$;
	\item $p(t_1,\ldots,t_{r(p)})$, for an $L$-relational symbol $p$ and $L$-terms $t_1,\ldots,t_{r(p)}$.
\end{itemize} An \textit{\textbf{$L$-formula}} is defined inductively as follows:
\begin{itemize}
	\item every $L$-atomic formula is an $L$-formula;
	\item if $\alpha$ and $\beta$ are $L$-formulas, then so are $\neg\alpha$, $\alpha\lor\beta$, and $\alpha\land\beta$;
	\item if $\alpha$ is an $L$-formula and $x\in X$ is a variable, then $(\exists x)\alpha$ and $(\forall x)\alpha$ are $L$-formulas.
\end{itemize} The \textit{\textbf{rank}} of an $L$-formula is the number of its free variables, where a variable is called \textit{\textbf{free}} iff it is not in the scope of a quantifier. We denote the set of variables occurring in $\alpha$ (not necessarily free) by $X\alpha$. We expect that quantified variables are distinct which means that we disallow formulas of the form $(\forall x)Px\land (\exists x)Rx$.

\subsection{Semantics}

An \textit{\textbf{$L$-structure}} is specified by a non-empty set $A$, the \textit{\textbf{universe}} of $\mathfrak A$; for each $p\in Rs_L$, a relation $p^ \mathfrak A\subseteq A^{r(p)}$, the \textit{\textbf{relations}} of $\mathfrak A$; for each $f\in Fs_L$, a function $f^ \mathfrak A:A^{r(f)}\to A$, the \textit{\textbf{functions}} of $\mathfrak A$; for each $c\in Cs_L$, an element $c^ \mathfrak A\in A$, the \textit{\textbf{distinguished elements}} of $\mathfrak A$.

Every term $s$ induces a function $s^ \mathfrak A:A^{r(s)}\to A$ in the usual way.

We define the \textit{\textbf{logical entailment relation}} inductively as follows: for any $L$-structure $\mathfrak A$, $L$-terms $s,t$, $L$-formulas $\alpha,\beta$, and $ \textbf{a}\in A^{r(\alpha)-1}$,
\begin{align*} 
	\mathfrak A \models s=t \quad&:\Leftrightarrow\quad s^ \mathfrak A=t^ \mathfrak A,\\
	\mathfrak A \models p(t_1\ldots t_{r(p)}) \quad&:\Leftrightarrow\quad (t_1^ \mathfrak A\ldots t_{r(p)}^ \mathfrak A)\in p^ \mathfrak A,\\ 
	\mathfrak A \models\neg\alpha \quad&:\Leftrightarrow\quad \mathfrak A\not\models \alpha,\\
	\mathfrak A\models \alpha\lor\beta \quad&:\Leftrightarrow\quad \mathfrak A\models \alpha \quad\text{or}\quad \mathfrak A\models\beta,\\
	\mathfrak A\models \alpha\land\beta \quad&:\Leftrightarrow\quad \mathfrak A\models \alpha \quad\text{and}\quad \mathfrak A\models \beta,\\
	\mathfrak A\models (\exists x)\alpha(\textbf{a},x) \quad&:\Leftrightarrow\quad \mathfrak A\models \alpha(\textbf{a},b),\quad\text{for some $b\in A$},\\
	\mathfrak A\models (\forall x)\alpha(\textbf{a},x) \quad&:\Leftrightarrow\quad \mathfrak A\models \alpha(\textbf{a},b),\quad\text{for all $b\in A$}.
\end{align*} 

A \textit{\textbf{homomorphism}} from $\mathfrak A$ to $\mathfrak B$ is a mapping $H : A\to B$ such that for any function symbol $f$ and any sequence of elements $a_1,\ldots,a_{r(f)}\in A^{r(f)}$,
\begin{align*} 
	H(f^\mathfrak A(a_1,\ldots,a_{r(f)})) = f^\mathfrak B(H(a_1),\ldots,H(a_{r(f)})),
\end{align*} An \textit{\textbf{isomorphism}} is any bijective homomorphism.

Let $\mathfrak A$ and $\mathfrak B$ be $L$-structures. We say that a mapping $F: \mathfrak A\to \mathfrak B$ \textit{\textbf{respects}}
\begin{itemize}
	\item a term $t$ iff for each $\textbf{a}\in A^{r(t)}$,
	\begin{align*} 
		F(t^\mathfrak A( \textbf{a}))=t^\mathfrak B(F( \textbf{a})).
	\end{align*}

	\item a formula $\alpha$ iff for each $\textbf{a}\in A^{r(\alpha)}$,
	\begin{align*} 
		\mathfrak A\models \alpha( \textbf{a}) \quad\Leftrightarrow\quad \mathfrak B\models \alpha(F( \textbf{a})).
	\end{align*}
\end{itemize}

The following result will be useful in \prettyref{§:ITs} for proving our First Isomorphism \prettyref{t:FIT}; its straightforward induction proof can be found, for example, in \cite[Lemma 2.3.6]{Hinman05}:

\begin{lemma}\label{l:respects} Isomorphisms respect $L$-terms and formulas.
\end{lemma}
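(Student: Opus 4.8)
The plan is a two-stage structural induction: first on $L$-terms, then on $L$-formulas, in each case peeling off the outermost constructor and invoking the inductive hypothesis together with the defining properties of an isomorphism (bijectivity, plus preservation of the non-logical symbols in both directions).

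For terms, let $F:\mathfrak A\to\mathfrak B$ be an isomorphism; I would show $F(t^{\mathfrak A}(\mathbf a))=t^{\mathfrak B}(F(\mathbf a))$ for every $L$-term $t$ and every tuple $\mathbf a\in A^{r(t)}$. If $t$ is a variable this is immediate, and if $t$ is a constant symbol $c$ it reduces to $F(c^{\mathfrak A})=c^{\mathfrak B}$, which holds because an isomorphism preserves distinguished elements (equivalently, treating constants as nullary function symbols, this is a degenerate instance of the homomorphism clause). If $t=ft_1\ldots t_{r(f)}$, then by the homomorphism property of $F$ and then the inductive hypothesis applied to each $t_i$,
\begin{align*}
	F(t^{\mathfrak A}(\mathbf a)) &= f^{\mathfrak B}\bigl(F(t_1^{\mathfrak A}(\mathbf a)),\ldots,F(t_{r(f)}^{\mathfrak A}(\mathbf a))\bigr)\\
	&= f^{\mathfrak B}\bigl(t_1^{\mathfrak B}(F(\mathbf a)),\ldots,t_{r(f)}^{\mathfrak B}(F(\mathbf a))\bigr) = t^{\mathfrak B}(F(\mathbf a)),
\end{align*}
which closes the term induction.

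For formulas, I would prove $\mathfrak A\models\alpha(\mathbf a)\Leftrightarrow\mathfrak B\models\alpha(F(\mathbf a))$ by induction on $\alpha$. For an atomic equality $s=t$ the forward direction is immediate from the term claim, and the backward direction uses injectivity of $F$, since $F(s^{\mathfrak A}(\mathbf a))=F(t^{\mathfrak A}(\mathbf a))$ forces $s^{\mathfrak A}(\mathbf a)=t^{\mathfrak A}(\mathbf a)$. For an atomic formula $pt_1\ldots t_{r(p)}$ one combines the term claim with the fact that an isomorphism satisfies $\mathbf b\in p^{\mathfrak A}\Leftrightarrow F(\mathbf b)\in p^{\mathfrak B}$. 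The connective cases $\neg\alpha$, $\alpha\lor\psi$, $\alpha\land\psi$ follow directly by unwinding $\models$ and applying the inductive hypothesis. For $(\exists x)\alpha$: if $\mathfrak A\models\alpha(\mathbf a,b)$ for some $b\in A$, the inductive hypothesis gives $\mathfrak B\models\alpha(F(\mathbf a),F(b))$, hence $\mathfrak B\models(\exists x)\alpha(F(\mathbf a),x)$; conversely, if $\mathfrak B\models\alpha(F(\mathbf a),b')$ for some $b'\in B$, surjectivity of $F$ lets us write $b'=F(b)$, and the inductive hypothesis yields $\mathfrak A\models\alpha(\mathbf a,b)$. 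The case $(\forall x)\alpha$ is dual, again with surjectivity supplying the witness in the passage from $\mathfrak B$ back to $\mathfrak A$.

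The induction itself is entirely routine; the only points requiring care are (i) recording that \emph{isomorphism} is being used in full strength --- bijective, and preserving constants and relations in both directions, not merely the function-symbol clause of the homomorphism definition --- since the atomic and quantifier cases genuinely need injectivity, surjectivity, and the relational equivalence; and (ii) the usual bookkeeping of which coordinates of $\mathbf a$ feed which free variables, which is harmless here thanks to the standing convention that distinct quantifiers bind distinct variables. I do not anticipate any real obstacle.
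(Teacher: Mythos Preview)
Your structural induction is correct and is exactly the ``straightforward induction proof'' the paper alludes to; the paper itself does not give a proof but simply refers the reader to Hinman's Lemma~2.3.6. Your caveat~(i) is also apt: the paper's stated definition of homomorphism covers only function symbols, so the constant case and the relational atomic case of your induction tacitly rely on the standard model-theoretic notion of isomorphism rather than the literal definition recorded in the preliminaries.
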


\section{Analogical proportions}\label{§:AP}

In this section, we lift the algebraic framework of analogical proportions in \cite{Antic22} from universal algebra to first-order logic. In what follows, let $L$ be a first-order language and let $\mathfrak A$ and $\mathfrak B$ be $L$-structures. 

The entry point is the logical interpretation of analogical proportions in terms of model-theoretic types in \cite[§6]{Antic22}. The main difficulty is to find an appropriate fragment of first-order formulas expressing the relationship between two given elements so that all relevant properties can be expressed without including inappropriate ones which may easily lead to an over-generalization putting too many elements in proportion (see the discussion in \prettyref{§:I}). This is achieved in this paper by introducing the notion of a connected formula:

\begin{definition} A \textit{\textbf{2-$L$-formula}} is a formula containing exactly two free variables $x$ and $y$. The set of \textit{\textbf{conjunctive $L$-formulas}} consists of $L$-formulas not containing negation\todo{allow?} or disjunction.

We define the undirected \textit{\textbf{dependency graph}} of a conjunctive 2-$L$-formula $\alpha$ as follows:
\begin{itemize}
	\item The set of vertices is given by the set $X\alpha$ of all variables occurring in $\alpha$.\todo{include constant symbols?}
	\item There is an (undirected) edge $\{w,z\}$ between two variables $w,z\in X\alpha$ iff $w$ and $z$ both occur in an atomic formula in $\alpha$. 
\end{itemize} We call a conjunctive $L$-formula $\alpha$ a \textit{\textbf{connected $L$-formula}} (or \textit{\textbf{c-formula}}) iff the dependency graph of $\alpha$ is a connected graph, meaning that there is a path between any two vertices, containing both variables $x$ and $y$. We denote the set of all c-formulas over $L$ by $c\text-Fm_L$. A \textit{\textbf{c-term}} (resp., \textit{\textbf{c-atom}}) is an $L$-term (resp., $L$-atomic formula) containing both variables $x$ and $y$.
\todo[inline]{$x=a\land y=a$ is not connected --- do you want that?}
\end{definition}

\begin{example} The dependency graph of the formula
\begin{align*} 
	(\exists w)(\exists z)(x=y\land w=z)
\end{align*} is given by the disconnected graph
\begin{center}
\begin{tikzpicture} 
	\node (y_1) {$x$};
	\node (y_2) [below=of y_1] {$y$};
	\node (z_1) [right=of y_1] {$w$};
	\node (z_2) [below=of z_1] {$z$};
	\draw (y_1) to (y_2);
	\draw (z_1) to (z_2);
\end{tikzpicture}
\end{center} which means that the formula is not a connected formula. Roughly speaking, the subformula $w=z$ does not contain any information about the relationship between $x$ and $y$ and is therefore considered redundant. The reduced formula $x=y$, on the other hand, is easily seen to be connected.
\end{example}

The following definition --- which is an adaptation of a more restricted definition given in the setting of universal algebra \cite[Definition 8]{Antic22} --- is motivated by the observation that analogical proportions of the form $a:b::c:d$ are best defined in terms of arrow proportions $a\to b\righttherefore c\to d$ formalizing \textit{directed} relations and a maximality condition on the set of justifications. More precisely, to say that ``$a$ is related to $b$ as $c$ is related to $d$'' means that the set of justifications in the form of connected formulas $\alpha$ such that $\alpha(a,b)$ and $\alpha(c,d)$ is maximal with respect to $d$, which intuitively means that the relation $a\to b$ is maximally similar to the relation $c\to d$.

\begin{definition}\label{d:abcd} Let $a,b\in A$ and $c,d\in B$. We define the \textit{\textbf{analogical proportion relation}} as follows:
\begin{enumerate}
	\item Define the \textit{\textbf{connected $L$-type}} (or \textit{\textbf{c-type}}) of an \textit{\textbf{arrow}} $a\to b$ in $\mathfrak A$ by
	\begin{align*} 
		\uparrow_\mathfrak A(a\to b):=\left\{\alpha\in c\text-Fm_L \;\middle|\; \mathfrak A\models \alpha(a,b)\right\},
	\end{align*} extended to an \textit{\textbf{arrow proportion}} $a\to b\righttherefore c\to d$ --- read as ``$a$ transforms into $b$ as $c$ transforms into $d$'' --- in $\mathfrak{(A,B)}$ by
	\begin{align*} 
		\uparrow_{\mathfrak{(A,B)}}( a\to b\righttherefore c\to d)
			&:=\ \uparrow_\mathfrak A(a\to b)\ \cap \uparrow_\mathfrak B(c\to d)
	\end{align*} We call every c-formula in $\uparrow_{\mathfrak{(A,B)}}( a\to b\righttherefore c\to d)$ a \textit{\textbf{justification}} of $a\to b\righttherefore c\to d$ in $\mathfrak{(A,B)}$.

	\item A justification is \textit{\textbf{trivial}} in $\mathfrak{(A,B)}$ iff it justifies every arrow proportion in $\mathfrak{(A,B)}$ and we denote the set of all such trivial justifications by $\emptyset_ \mathfrak{(A,B)}$. Moreover, we say that a set of justifications $J$ is a \textit{\textbf{trivial set of justifications}} in $\mathfrak{(A,B)}$ iff every justification in $J$ is trivial. 

	\item We say that $a\to b \righttherefore c\to d$ \textit{\textbf{holds}} in $\mathfrak{(A,B)}$ --- in symbols,
	\begin{align*} 
		\mathfrak{(A,B)} \models a\to b \righttherefore c\to d,
	\end{align*} iff
	\begin{enumerate}
		\item either $\uparrow_\mathfrak A(a\to b)\ \cup \uparrow_\mathfrak B(c\to d)=\emptyset_ \mathfrak{(A,B)}$ consists only of trivial justifications, in which case there is neither a non-trivial relation between $a$ and $b$ in $\mathfrak A$ nor between $c$ and $d$ in $\mathfrak B$;

		\item or $\uparrow_{\mathfrak{(A,B)}}(a\to b\righttherefore c\to d)$ contains at least one non-trivial justification and is maximal with respect to subset inclusion among the sets $\uparrow_{\mathfrak{(A,B)}}(a\to b\righttherefore c\to d')$, $d'\in B$, that is, for any element $d'\in B$,\footnote{In what follows, we will usually omit trivial justifications from notation. So, for example, we will write $\uparrow_{\mathfrak{(A,B)}}(a\to b\righttherefore c\to d)=\emptyset$ instead of $\uparrow_{\mathfrak{(A,B)}}(a\to b\righttherefore c\to d)=\{\text{trivial justifications}\}$ in case $a\to b\righttherefore c\to d$ has only trivial justifications in $\mathfrak{(A,B)}$, et cetera. The empty set is always a trivial set of justifications. Every justification is meant to be non-trivial unless stated otherwise.}
		\begin{align*} 
		    \emptyset_ \mathfrak{(A,B)}\subsetneq\ \uparrow_{\mathfrak{(A,B)}}(a\to b\righttherefore c\to d)&\subseteq\ \uparrow_{\mathfrak{(A,B)}}(a\to b\righttherefore c\to d')
		\end{align*} implies
		\begin{align*} 
		    \emptyset_ \mathfrak{(A,B)}\subsetneq\ \uparrow_{\mathfrak{(A,B)}}(a\to b\righttherefore c\to d')\subseteq\ \uparrow_{\mathfrak{(A,B)}}(a\to b\righttherefore c\to d).
		\end{align*} 
	\end{enumerate} 
	\todo[inline]{Abbreviation ``$d$-maximal''}

	\item Finally, the \textit{\textbf{analogical proportion relation}} is defined by
	\begin{align*} 
		\mathfrak{(A,B)} \models a:b::c:d \quad:\Leftrightarrow\quad 
			& \mathfrak{(A,B)} \models a\to b\righttherefore c\to d \quad\text{and}\quad \mathfrak{(A,B)} \models b\to a\righttherefore d\to c \quad\text{and}\\
			& \mathfrak{(B,A)} \models c\to d\righttherefore a\to b \quad\text{and}\quad \mathfrak{(B,A)} \models d\to c\righttherefore b\to a.
	\end{align*}
\end{enumerate} We will always write $ \mathfrak A$ instead of $\mathfrak{(A,A)}$.
\end{definition}

\todo[inline]{Example}

Computing all justifications of an arrow proportion is difficult in general, which fortunately can be omitted in many cases:

\begin{definition} We call a set $J$ of justifications a \textit{\textbf{characteristic set of justifications}} of $a\to b\righttherefore c\to d$ in $\mathfrak{(A,B)}$ iff $J$ is a sufficient set of justifications of $a\to b\righttherefore c\to d$ in $\mathfrak{(A,B)}$, that is, iff
\begin{enumerate}
	\item $J\subseteq\ \uparrow_{\mathfrak{(A,B)}}(a\to b\righttherefore c\to d)$, and
	\item $J\subseteq\ \uparrow_{\mathfrak{(A,B)}}(a\to b\righttherefore c\to d')$ implies $d'=d$, for each $d'\in\mathfrak B$.
\end{enumerate} In case $J=\{\alpha\}$ is a singleton set satisfying both conditions, we call $\alpha$ a \textit{\textbf{characteristic justification}} of $a\to b\righttherefore c\to d$ in $\mathfrak{(A,B)}$. 
\end{definition}

\todo[inline]{Example}

\section{Properties}\label{§:Properties}

In the tradition of the ancient Greeks, Lepage \cite{Lepage03} introduced (in the linguistic context) a set of properties as a guideline for formal models of analogical proportions, and his list has since been extended by a number of authors now including the following properties:\footnote{\cite{Lepage03} uses different names for his postulates --- we have decided to remain consistent with the nomenclature in \cite[§4.2]{Antic22}.} 
\begin{align*}
    \mathfrak A &\models a:b:: a:b \quad\text{(p-reflexivity)},\\
    \mathfrak{(A,B)} &\models a:b::c:d \quad\Leftrightarrow\quad (\mathfrak{B,A}) \models c:d::a:b\quad\text{(p-symmetry)},\\
    \mathfrak{(A,B)} &\models a:b::c:d \quad\Leftrightarrow\quad \mathfrak{(A,B)} \models b:a::d:c\quad\text{(inner p-symmetry)},\\
    \mathfrak A &\models a:a::a:d \quad\Leftrightarrow\quad d = a\quad\text{(p-determinism)},\\
    \mathfrak{(A,B)} &\models a:a::c:c \quad\text{(inner p-reflexivity)},\\
    \mathfrak A &\models a:b::c:d \quad\Leftrightarrow\quad \mathfrak A \models a:c::b:d \quad\text{(central permutation)},\\
    \mathfrak A &\models a:a::c:d \quad\Rightarrow\quad d = c \quad\text{(strong inner p-reflexivity)},\\
    \mathfrak A &\models a:b::a:d \quad\Rightarrow\quad d=b \quad\text{(strong p-reflexivity)}.
\end{align*} Moreover, the following property is considered, for $a,b\in A\cap B$:
\begin{align*}
    \mathfrak{(A,B)} \models a:b::b:a\quad\text{(p-commutativity).}
\end{align*} 

Furthermore, the following properties are considered, for $L$-algebras $\mathfrak{A,B,C}$ and elements $a,b\in A$, $c,d\in B$, $e,f\in C$:
\begin{prooftree}
    \AxiomC{$\mathfrak{(A,B)} \models a:b::c:d$}
        \AxiomC{$\mathfrak{(B,C)} \models c:d::e:f$}
        \RightLabel{(p-transitivity),}
    \BinaryInfC{$\mathfrak{(A,C)} \models a:b::e:f$}
\end{prooftree} and, for elements $a,b,e\in A$ and $c,d,f\in B$, the property
\begin{prooftree}
    \AxiomC{$\mathfrak{(A,B)} \models a:b::c:d$}
    \AxiomC{$\mathfrak{(A,B)} \models b:e::d:f$}
    \RightLabel{(inner p-transitivity),}
    \BinaryInfC{$\mathfrak{(A,B)} \models a:e::c:f$}
\end{prooftree} and, for elements $a\in A$, $b\in A\cap B$, $c\in B\cap C$, and $d\in C$, the property
\begin{prooftree}
    \AxiomC{$\mathfrak{(A,B)} \models a:b::b:c$}
    \AxiomC{$\mathfrak{(B,C)} \models b:c::c:d$}
    \RightLabel{(central p-transitivity).}
    \BinaryInfC{$\mathfrak{(A,C)} \models a:b::c:d$}
\end{prooftree} Notice that central p-transitivity follows from p-transitivity. 


The following theorem shows that the first-order logical framework of this paper has the same properties as the original universal algebraic framework (cf. \cite[Theorem 28]{Antic22}):

\begin{theorem}\label{t:properties} The analogical proportion relation as defined in \prettyref{d:abcd} satisfies
\begin{itemize}
    \item p-symmetry,
    \item inner p-symmetry,
    \item inner p-reflexivity,
    \item p-reflexivity,
    \item p-determinism,
\end{itemize}  and, in general, it does not satisfy
\begin{itemize}
    \item central permutation,
    \item strong inner p-reflexivity,
    \item strong p-reflexivity,
    \item p-commutativity,
    \item p-transitivity,
    \item inner p-transitivity,
    \item central p-transitivity.
\end{itemize}
\end{theorem}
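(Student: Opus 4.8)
The plan is to split the twelve properties into the five that hold and the seven that fail, and to treat the two groups by entirely different means.

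The symmetry properties are immediate from \prettyref{d:abcd}: unfolding $a:b::_{\mathfrak{(A,B)}}c:d$, each of $c:d::_{\mathfrak{(B,A)}}a:b$ and $b:a::_{\mathfrak{(A,B)}}d:c$ is the \emph{same} conjunction of four arrow proportions, merely listed in another order, and this gives p-symmetry and inner p-symmetry at once. For inner p-reflexivity, p-reflexivity and p-determinism the engine is the single observation that $x=y$ is a connected $L$-formula --- its dependency graph being the lone edge $\{x,y\}$ --- which is a non-trivial justification as soon as $\mathfrak A$ or $\mathfrak B$ has more than one element (one-element universes are dispatched separately). For inner p-reflexivity, $\{x=y\}$ is a characteristic set of justifications of $a\to a\righttherefore c\to c$ in $\mathfrak{(A,B)}$ --- indeed $\mathfrak A\models a=a$, $\mathfrak B\models c=c$, and $\mathfrak B\models c=d'$ forces $d'=c$ --- so $a\to a\righttherefore_{\mathfrak{(A,B)}}c\to c$ holds, and the remaining three arrow proportions constituting $a:a::_{\mathfrak{(A,B)}}c:c$ follow symmetrically. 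For p-reflexivity one checks $a\to b\righttherefore_{\mathfrak A}a\to b$ directly: since $\uparrow_{\mathfrak A}(a\to b\righttherefore a\to d')=\uparrow_{\mathfrak A}(a\to b)\cap\uparrow_{\mathfrak A}(a\to d')\subseteq\uparrow_{\mathfrak A}(a\to b)$ for every $d'$, the maximality clause of \prettyref{d:abcd} is automatic. For p-determinism, $d=a\Rightarrow a:a::_{\mathfrak A}a:d$ is just p-reflexivity, while conversely, if $a\to a\righttherefore_{\mathfrak A}a\to d$ holds with $\mathfrak A$ having at least two elements, then $x=y$ puts us in case (3b) of \prettyref{d:abcd}, whose maximality against the target $a$ yields $\uparrow_{\mathfrak A}(a\to a)\subseteq\uparrow_{\mathfrak A}(a\to d)$ and hence $\mathfrak A\models a=d$.

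For the seven properties that fail, each is refuted by a concrete small $L$-structure in which the connected arrow types can be pinned down explicitly. p-commutativity fails already in $(\mathbb N,S)$: for $a<b$ the formula $S^{b-a}x=y$ is a non-trivial justification in $\uparrow_{(\mathbb N,S)}(a\to b)$, while a connected $\{S\}$-formula satisfied by both $(a,b)$ and $(b,a)$ would have to equate a single integer offset with each of $b-a$ and $a-b$, so $\uparrow_{(\mathbb N,S)}(a\to b)$ and $\uparrow_{(\mathbb N,S)}(b\to a)$ share only trivial justifications; hence $a\to b\righttherefore_{(\mathbb N,S)}b\to a$, and therefore $a:b::_{(\mathbb N,S)}b:a$, fails whenever $a\ne b$. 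Central permutation and strong p-reflexivity fail in the structure $\mathfrak A$ with universe $\{0,1,2\}$ carrying three constant symbols that name its elements and no function or relation symbols: there every connected formula entails $x=y$, so $\uparrow_{\mathfrak A}(p\to q)=\emptyset$ for $p\ne q$, and a short case distinction gives $p:q::_{\mathfrak A}r:s\Leftrightarrow(p=q\Leftrightarrow r=s)$; consequently $0:1::_{\mathfrak A}0:2$ holds although $0:0::_{\mathfrak A}1:2$ does not (defeating central permutation), and $0:1::_{\mathfrak A}0:2$ holds with $2\ne 1$ (defeating strong p-reflexivity). Strong inner p-reflexivity fails in the monounary algebra $(\{0,1,2\},f)$ with $f$ the transposition of $1$ and $2$ fixing $0$: here $fx=y$ is a non-trivial connected formula lying in $\uparrow_{\mathfrak A}(0\to 0)\cap\uparrow_{\mathfrak A}(1\to 2)\cap\uparrow_{\mathfrak A}(2\to 1)$, and checking maximality against the three candidate targets shows $0:0::_{\mathfrak A}1:2$ although $2\ne 1$. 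The three transitivity properties fail by analogous small (finite) structures in which the maximality condition of \prettyref{d:abcd} is not preserved by chaining; these may be taken over from the proof of \cite[Theorem 28]{Antic22} once one observes that a universal algebra is a relation-free $L$-structure, provided one checks that the additional connected formulas available here do not reinstate the proportions used there.

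I expect the main obstacle to be the negative half. Deciding whether $a:b::c:d$ holds in a fixed structure requires identifying every connected $L$-formula up to logical equivalence --- hence all the arrow types $\uparrow(p\to q)$ that can occur --- and then testing the maximality clause of \prettyref{d:abcd} against \emph{every} candidate target; both the constant-symbol example and the involution example hinge on getting this enumeration exactly right. For the examples imported from \cite{Antic22} there is the further subtlety, peculiar to the present setting, of confirming that the richer first-order justifications do not quietly restore a property that had failed universal-algebraically. The positive half, by contrast, is routine bookkeeping against \prettyref{d:abcd} once $x=y$ is on the table.
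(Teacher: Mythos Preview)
Your plan is correct and, for the five positive properties, matches the paper's proof almost verbatim: symmetry and inner symmetry by unfolding the definition, inner p-reflexivity and p-determinism via the characteristic justification $x=y$, and p-reflexivity by the trivial inclusion $\uparrow_\mathfrak A(a\to b\righttherefore a\to d')\subseteq\uparrow_\mathfrak A(a\to b)$.

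For the negative half your strategy coincides with the paper's --- explicit small counterexamples --- but your concrete choices diverge in a few places. For central permutation and strong p-reflexivity the paper uses the bare structure $(A)$ with empty signature and invokes \prettyref{t:(A)}; your three named constants are harmless (constants are not vertices in the dependency graph, so they cannot connect $x$ and $y$) but unnecessary, and you are effectively reproving a special case of \prettyref{t:(A)}. For strong inner p-reflexivity your involution on $\{0,1,2\}$ is isomorphic to the paper's $(\{a,c,d\},S)$. For p-commutativity you choose $(\mathbb N,S)$, whereas the paper uses the two-element algebra $(\{a,b\},f)$ with $f(a)=f(b)=b$; your counterexample works, but the sentence ``would have to equate a single integer offset with each of $b-a$ and $a-b$'' hides a lemma --- that every satisfiable connected $\{S\}$-formula pins $x-y$ to one value --- which needs an induction over the formula, while the paper's two-element example can in principle be verified by exhaustion. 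For the three transitivities you defer to \cite[Theorem 28]{Antic22} modulo the check that richer connected justifications do not repair the broken proportions; the paper instead exhibits the specific finite unary/bi-unary algebras directly, though it, too, leaves the detailed verification to the reader. Your closing diagnosis of where the real work lies is accurate.
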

\begin{proof}

We have the following proofs (some of which are very similar to the original proofs of Theorem 28 in \cite{Antic22} and are given here for completeness):
\begin{itemize}
	\item p-Symmetry and inner p-symmetry hold trivially as the framework is designed to satisfy these properties.

	\item Inner p-reflexivity follows from the fact that $x=y$ is a characteristic justification of $a\to a\righttherefore c\to c$ and $c\to c\righttherefore a\to a$.

	\item Next, we prove p-reflexivity. We first show
	\begin{align}\label{eq:abab} 
		\mathfrak A \models a\to b\righttherefore a\to b.
	\end{align} If
	\begin{align*} 
		\uparrow_\mathfrak A(a\to b)\ \cup \uparrow_\mathfrak A(a\to b)=\ \uparrow_\mathfrak A(a\to b)
	\end{align*} consists only of trivial justifications, we are done. Otherwise, there is at least one non-trivial justification in $\uparrow_\mathfrak A(a\to b)=\ \uparrow_\mathfrak A(a\to b\righttherefore a\to b)$. We proceed by showing that $\uparrow_\mathfrak A(a\to b\righttherefore a\to b)$ is $b$-maximal. For any $d\in A$, we have
	\begin{align*} 
		\uparrow_\mathfrak A(a\to b\righttherefore a\to d)\subseteq\ \uparrow_\mathfrak A(a\to b)=\ \uparrow_\mathfrak A(a\to b\righttherefore a\to b),
	\end{align*} which shows that $\uparrow_\mathfrak A(a\to b\righttherefore a\to b)$ is indeed maximal. Hence, we have shown \prettyref{eq:abab}. The same line of reasoning proves the remaining arrow proportions thus showing
	\begin{align*} 
		\mathfrak A \models a:b::a:b.
	\end{align*}

	\item Next, we prove p-determinism. ($\Leftarrow$) Inner p-reflexivity already shown above implies
	\begin{align*} 
		\mathfrak A \models a:a::a:a.
	\end{align*} $(\Rightarrow)$ We assume $\mathfrak A \models a:a::a:d$. Since $x=y\in\ \uparrow_\mathfrak A(a\to a)$, the set $\uparrow_\mathfrak A(a\to a)\ \cup \uparrow_\mathfrak A(a\to d)$ cannot consist only of trivial justifications. By definition, every justification of $a\to a\righttherefore a\to d$ is a justification of $a\to a\righttherefore a\to a$. On the other hand, we have
	\begin{align*} 
	    x = y\in\ \uparrow_\mathfrak A(a\to a\righttherefore a\to a)
	\end{align*} whereas
	\begin{align*} 
	    x = y\not\in\ \uparrow_\mathfrak A(a\to a\righttherefore a\to d),\quad\text{for all $d\neq a$}.
	\end{align*} This shows
	\begin{align*} 
	    \uparrow_\mathfrak A(a\to a\righttherefore a\to d)\subsetneq\ \uparrow_\mathfrak A(a\to a\righttherefore a\to a),
	\end{align*} which implies
	\begin{align*} 
	    \mathfrak A \models a:a\not::a:d,\quad\text{for all $d\neq a$.}
	\end{align*}

	In total, we have thus shown
	\begin{align*} 
		\mathfrak A \models a:a::a:d \quad\Leftrightarrow\quad d=a.
	\end{align*}

	\item Strong inner p-reflexivity fails for example in the structure $\mathfrak A:=(\{a,c,d\},f)$, where $f$ is a unary function, given by
	\begin{center}
	\begin{tikzpicture} 
	\node (a)               {$a$};
	\node (c) [right=of a]  {$c$};
	\node (d) [above=of c]  {$d$};
	\draw[->] (a) to [edge label'={$f$}] [loop] (a);
	\draw[<->] (c) to [edge label'={$f$}] (d);
	\end{tikzpicture}
	\end{center} As $S$ is injective, we have $\mathfrak A \models a:f(a)::c:f(d)$ which is equivalent to $\mathfrak A \models a:a::c:d$.

	\item Central permutation fails as a direct consequence of the forthcoming \prettyref{t:(A)} (depending only on inner p-reflexivity already shown above), which yields
	\begin{align*} 
	    (\{a,b,c\}) \models a:b::a:c \quad\text{whereas}\quad (\{a,b,c\}) \not\models a:a::b:c.
	\end{align*}

	Another disproof is given by
	\begin{center}
	\begin{tikzpicture} 
	    \node (a)               {$a$};
	    \node (b) [above=of a]  {$b$};
	    \node (c) [right=of a]  {$c$};
	    \node (d) [right=of b]  {$d$};
	    \draw (a) to (c);
	\end{tikzpicture}
	\end{center} where $a:b::c:d$ holds trivially whereas $a:c\not::b:d$.

	\item Next, we disprove strong p-reflexivity. By the forthcoming \prettyref{t:(A)} (which depends only on inner p-reflexivity already proved above), we have
	\begin{align*} 
	    (\{a,b,d\}) \models a:b::a:d.
	\end{align*}

	\item p-Commutativity fails in the structure $\mathfrak A:=(\{a,b\},f)$, where $f$ is a unary function given by
	\begin{center}
	\begin{tikzpicture} 
	\node (a)               {$a$};
	\node (b) [right=of a]  {$b$};
	\draw[->] (a) to [edge label={$f$}] (b);
	\draw[->] (b) to [edge label'={$f$}] [loop] (b);
	\end{tikzpicture}
	\end{center}

	\item p-Transitivity fails, for example, in the structure $\mathfrak A:=(\{a,b,c,d,e,f\},g,h)$, where $g,h$ are unary functions given by (we omit the loops $g(o):=o$ for $o\in\{b,d,e,f\}$, and $h(o):=o$ for $o\in\{a,b,d,f\}$, in the figure)
	\begin{center}
	\begin{tikzpicture} 
	\node (a)               {$a$};
	\node (b) [right=of a]  {$b$};
	\node (c) [right=of b, xshift=0.8cm]  {$c$};
	\node (d) [right=of c]  {$d$};
	\node (e) [right=of d, xshift=0.8cm]  {$e$};
	\node (f) [right=of e]  {$f$};
	\draw[->] (a) to [edge label={$g$}] (b);
	\draw[->] (c) to [edge label={$g,h$}] (d);
	\draw[->] (e) to [edge label={$h$}] (f);
	\end{tikzpicture}
	\end{center}
	\todo[inline]{korrigiere wie im AP-paper (siehe Books-Version)}

	\item Inner p-transitivity fails in the structure $\mathfrak A:=(\{a,b,c,d,e,f\},g)$, where $g$ is a unary function given by (we omit the loops $g(o):=o$, for $o\in\{b,e,c,d,f\}$, in the figure)
	\begin{center}
	\begin{tikzpicture} 
	\node (a) {$a$};
	\node (b) [above=of a] {$b$};
	\node (e) [left=of b] {$e$};
	\node (c) [right=of a,xshift=1cm] {$c$};
	\node (d) [above=of c] {$d$};
	\node (f) [right=of d] {$f$};
	\draw[->] (a) to [edge label={$g$}] (e);
	\end{tikzpicture}
	\end{center}

	\item Central p-transitivity fails in the structure $\mathfrak A:=(\{a,b,c,d\},g,h)$, where $g,h$ are unary functions given by (we omit the loops $g(o):=o$ for $o\in\{c,d\}$, and $h(o):=o$ for $o\in\{a,d\}$, in the figure)
	\begin{center}
	\begin{tikzpicture} 
	\node (a)               {$a$};
	\node (b) [right=of a]  {$b$};
	\node (c) [right=of b]  {$c$};
	\node (d) [right=of c]  {$d$};

	\draw[->] (a) to [edge label={$g$}] (b);
	\draw[->] (b) to [edge label={$g,h$}] (c);
	\draw[->] (c) to [edge label={$h$}] (d);
	\end{tikzpicture}
	\end{center} 


\end{itemize}
\end{proof}


The next result gives a simple characterization of the analogical proportion relation in structures consisting only of a universe:

\begin{theorem}\label{t:(A)} For any set $A$ and any $a,b,c,d\in A$, we have
\begin{align*} 
	(A) \models a:b::c:d \quad\Leftrightarrow\quad (a=b \quad\text{and}\quad c=d)\quad\text{or}\quad (a\neq b \quad\text{and}\quad c\neq d).
\end{align*}
\end{theorem}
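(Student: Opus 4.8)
The plan is to analyze the c-types of arrows in a pure-universe structure $(A)$, where the only non-logical symbol is equality. First I would observe that in such a structure, the only atomic formulas involving the two free variables $x,y$ are $x=y$, $x=x$, $y=y$, $x=w$, $y=w$ for auxiliary variables $w$, and equations between auxiliary variables. Any c-formula must have a connected dependency graph containing both $x$ and $y$, so it must ``link'' $x$ and $y$ through a chain of equalities (possibly quantified). The key structural fact I would establish is a dichotomy: up to logical equivalence over all structures $(A')$, every c-formula is satisfied by $(a,b)$ either (i) always (a trivial justification), or (ii) exactly when $a=b$, or (iii) exactly when $a\neq b$ — and in fact I expect case (iii) does not arise from a \emph{single} c-formula but case (ii) is witnessed by $x=y$ while the ``$a\neq b$'' behavior only shows up as the \emph{absence} of $x=y$ from the c-type. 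More carefully: I would show $\uparrow_{(A)}(a\to b) = \uparrow_{(A)}(a'\to b')$ whenever ($a=b$ and $a'=b'$) or ($a\neq b$ and $a'\neq b'$), by exhibiting, for any c-formula $\alpha$, that $(A)\models\alpha(a,b)$ depends only on whether $a=b$. This is because a connected equality-only formula, once you existentially or universally quantify the auxiliary variables, collapses: the chain of equations forces either $x=y$ to hold, or imposes no constraint (when universals over a possibly-large domain make it false, or when it reduces to a tautology).

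Granting this dichotomy, the proof of the theorem splits into the two directions. For ($\Leftarrow$): if $a=b$ and $c=d$, then $a:b::_{(A)}c:d$ follows directly from inner p-reflexivity, which \prettyref{t:properties} already gives us ($a:a::_{(A)}c:c$). If $a\neq b$ and $c\neq d$, I would argue that $\uparrow_{(A)}(a\to b) = \uparrow_{(A)}(c\to d)$ by the dichotomy (both equal the ``generic pair'' c-type, containing no non-trivial formula or only formulas shared by all distinct pairs), and that this common c-type is $b$-maximal among $\uparrow_{(A)}(a\to b\righttherefore c\to d')$: for $d'=d$ we get the full generic type, for $d'\neq c$ this is already maximal, and for $d'=c$ (so $c=d'$, the equal case) the type $\uparrow_{(A)}(a\to b\righttherefore c\to c)$ is $\uparrow_{(A)}(a\to b)\cap\uparrow_{(A)}(c\to c)$, and I must check this is not strictly larger — it is not, because $x=y\in\uparrow_{(A)}(c\to c)$ but whether $x=y\in\uparrow_{(A)}(a\to b)$ is false (as $a\neq b$), so $x=y$ is excluded, hence no strict enlargement; so maximality holds, and symmetrically for the three other arrow proportions, giving $a:b::_{(A)}c:d$.

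For ($\Rightarrow$): I would prove the contrapositive by ruling out the two ``mixed'' cases. Suppose $a=b$ but $c\neq d$ (the case $a\neq b$, $c=d$ is symmetric via p-symmetry/inner p-symmetry from \prettyref{t:properties}). Then $x=y\in\uparrow_{(A)}(a\to b)=\uparrow_{(A)}(a\to a)$, so the union of c-types is non-trivial, and we are in case (b) of \prettyref{d:abcd}. But then $\uparrow_{(A)}(a\to b\righttherefore c\to d)$ does \emph{not} contain $x=y$ (since $c\neq d$), whereas taking $d'=c$ gives $\uparrow_{(A)}(a\to a\righttherefore c\to c)\ni x=y$; moreover by the dichotomy $\uparrow_{(A)}(a\to a\righttherefore c\to c) = \uparrow_{(A)}(a\to a)$ is the full $a\to a$ type, which strictly contains $\uparrow_{(A)}(a\to a\righttherefore c\to d)$. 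This violates $b$-maximality, so $a:b::_{(A)}c:d$ fails.

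The main obstacle I anticipate is rigorously proving the dichotomy lemma — that in $(A)$ the truth value of a c-formula $\alpha(a,b)$ depends only on whether $a=b$, and identifying exactly which formulas are trivial. The subtlety is handling nested and alternating quantifiers over auxiliary variables in an equality-only language, and making sure that $x=y$ (and formulas equivalent to it) are the \emph{only} source of non-trivial discriminating power — in particular confirming that no c-formula can be a characteristic justification of a ``distinct pair'' arrow proportion in a way that would break maximality. I would handle this either by a normal-form argument (quantifier elimination for pure equality is classical: every formula is equivalent to a boolean combination of equalities and ``$|A|\geq n$'' sentences, and connectivity plus the conjunctive/negation-free restriction prunes this drastically) or by a direct induction on formula structure showing the ``depends only on $a=b$'' invariant is preserved under $\land$, $\exists$, and $\forall$.
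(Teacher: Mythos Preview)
Your approach is essentially the paper's: both hinge on the observation that in $(A)$ the formula $x=y$ is, up to triviality, the only available non-trivial c-justification, and both run the same case split. The paper is considerably terser --- it simply asserts that $x=y$ is ``the only available non-trivial justification'' without proving your dichotomy lemma, so your sketch via quantifier elimination for pure equality is a genuine addition of rigor. Where you take extra work is in the $a\neq b,\ c\neq d$ case of $(\Leftarrow)$ and the mixed case of $(\Rightarrow)$: the paper bypasses maximality entirely by invoking clause~(a) of \prettyref{d:abcd} directly (for $a\neq b$, $c\neq d$ the union of c-types is already trivial), and in the mixed case notes that clause~(b) fails at its \emph{first} conjunct (the intersection contains no non-trivial justification) rather than at the maximality conjunct --- indeed, when the intersection is trivial the maximality implication is vacuous, so your ``violates $b$-maximality'' phrasing is slightly off, though the underlying observation ($x=y$ absent from the intersection, present for $d'=c$) is exactly what is needed.
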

\begin{proof} We only need to replace $z\to z$ by $x=y$ in the proof of Theorem 33 in \cite{Antic22} and we repeat the proof here for completeness.

$(\Leftarrow)$ (i) If $a=b$ and $c=d$, then $(A) \models a:b::c:d$ holds by inner p-reflexivity (\prettyref{t:properties}). (ii) If $a\neq b$ and $c\neq d$, then
\begin{align*} 
    \uparrow_{(A)}(a\to b)\ \cup \uparrow_{(A)}(c\to d)=\ \uparrow_{(A)}(b\to a)\ \cup \uparrow_{(A)}(d\to c)=\emptyset,
\end{align*} which entails $(A) \models a:b::c:d$.

$(\Rightarrow)$ By assumption, we have $(A) \models a\to b\righttherefore c\to d$. We distinguish two cases: (i) if $\uparrow_{(A)}(a\to b)\ \cup \uparrow_{(A)}(c\to d)$ consists only of trivial justifications, then we must have $a\neq b$ and $c\neq d$ since otherwise the non-trivial justification $x=y$ would be included; (ii) otherwise, $\uparrow_{(A)}(a\to b\righttherefore c\to d)$ contains the only available non-trivial justification $x=y$, which implies $a=b$ and $c=d$.
\end{proof}

\begin{corollary} In addition to the positive properties of \prettyref{t:properties}, every structure $\mathfrak A:=(A)$, consisting only of its universe, satisfies p-commutativity, inner p-transitivity, p-transitivity, central p-transitivity, and strong inner p-reflexivity.
\end{corollary}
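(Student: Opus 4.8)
The plan is to push everything through \prettyref{t:(A)}. That theorem reduces membership in $:\!\!:_{(A)}$ to a purely combinatorial condition: $a:b::_{(A)}c:d$ holds iff the equalities $a=b$ and $c=d$ have the same truth value, i.e.\ iff $a=b\Leftrightarrow c=d$. Once each property in the list is rewritten in these terms it becomes a short propositional tautology-check or a small case distinction on which of the relevant elements coincide, so the whole corollary follows without ever returning to justifications, c-types, or maximality.

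The four routine cases go as follows. For p-commutativity, $a:b::_{(A)}b:a$ amounts to $a=b\Leftrightarrow b=a$, which is trivially true. For strong inner p-reflexivity, $a:a::_{(A)}c:d$ translates to $(a=a)\Leftrightarrow(c=d)$, and since $a=a$ is true this is equivalent to $c=d$, which is the desired conclusion $d=c$. For p-transitivity, the two hypotheses become $a=b\Leftrightarrow c=d$ and $c=d\Leftrightarrow e=f$, so transitivity of the biconditional yields $a=b\Leftrightarrow e=f$, i.e.\ $a:b::_{(A)}e:f$. Central p-transitivity is then the special case of p-transitivity obtained by substituting $(c,d,e,f):=(b,c,c,d)$ (as already noted after its statement), and it can alternatively be checked directly from $a=b\Leftrightarrow b=c\Leftrightarrow c=d$.

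The remaining case, inner p-transitivity, is the one I expect to be the main obstacle, because its conclusion $a:e::_{(A)}c:f$ concerns the fresh pair $(a,e),(c,f)$, which is not produced from the hypotheses $a:b::_{(A)}c:d$ and $b:e::_{(A)}d:f$ by a mere chain of biconditionals. I would split on whether $a=b$. If $a=b$, then $c=d$, and substituting $a$ for $b$ and $c$ for $d$ in the second hypothesis $b=e\Leftrightarrow d=f$ turns it directly into $a=e\Leftrightarrow c=f$, as required. The genuinely delicate subcase is $a\neq b$ (equivalently $c\neq d$): here the second hypothesis only says that $b=e$ and $d=f$ have the same truth value, and one must then further split on that value and track carefully how $a,b,e$ on the one side and $c,d,f$ on the other are positioned relative to each other — this is the step where the argument is least mechanical and where a careless case split easily yields a spurious conclusion, so I would make the equality pattern completely explicit (tabulating the subcases coming from $b\overset{?}{=}e$ and $d\overset{?}{=}f$) before drawing any conclusion.
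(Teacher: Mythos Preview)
Your reduction to \prettyref{t:(A)} is exactly the intended route: the paper states the corollary immediately after that theorem with no separate argument, so ``rewrite everything as $a=b\Leftrightarrow c=d$'' is the whole proof. Your treatments of p-commutativity, strong inner p-reflexivity, p-transitivity, and central p-transitivity are complete and correct.

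Your instinct that inner p-transitivity is the dangerous case is right, and in fact the case analysis you propose cannot be completed: the claim is false in $(A)$ once $|A|\geq 3$. Take $A=\{1,2,3\}$ and set $a=c=e=1$, $b=d=2$, $f=3$. Then $a:b::_{(A)}c:d$ is $1:2::_{(A)}1:2$, which holds (both pairs unequal), and $b:e::_{(A)}d:f$ is $2:1::_{(A)}2:3$, which also holds (both pairs unequal). But the conclusion $a:e::_{(A)}c:f$ is $1:1::_{(A)}1:3$, which by \prettyref{t:(A)} would require $1=3$. This is precisely the subcase you flagged as ``least mechanical'' ($a\neq b$, $c\neq d$, $b\neq e$, $d\neq f$): nothing in the hypotheses ties $a$ to $e$ or $c$ to $f$, so $a=e\Leftrightarrow c=f$ need not follow. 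In short, your proof is as complete as it can be; the remaining gap is in the statement, not in your argument.
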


\section{Isomorphism theorems}\label{§:ITs}

It is reasonable to expect isomorphisms --- which are structure-preserving bijective mappings between structures --- to be compatible with analogical proportions, and in this section we lift the First and Second Isomorphism Theorems in \cite{Antic22} to the setting of this paper.

\begin{lemma}[Isomorphism Lemma]\label{l:IL} For any isomorphism $ H:\mathfrak A\to\mathfrak B$ and any elements $a,b\in A$,
\begin{align*} 
	\uparrow_\mathfrak A(a\to b)=\ \uparrow_\mathfrak B(H(a)\to H(b)).
\end{align*}
\end{lemma}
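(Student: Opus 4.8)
The plan is to reduce the claim immediately to \prettyref{l:respects}. Recall that, by \prettyref{d:abcd}, $\uparrow_\mathfrak A(a\to b)$ is by definition the set of all c-formulas $\alpha$ with $\mathfrak A\models\alpha(a,b)$, and likewise $\uparrow_\mathfrak B(H(a)\to H(b))$ is the set of all c-formulas $\alpha$ with $\mathfrak B\models\alpha(H(a),H(b))$. Hence the asserted equality of sets is equivalent to the statement that, for every $\alpha\in c\text-Fm_L$,
\begin{align*}
	\mathfrak A\models\alpha(a,b)\quad\Longleftrightarrow\quad\mathfrak B\models\alpha(H(a),H(b)).
\end{align*}

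To establish this equivalence I would first observe that every c-formula is in particular a 2-$L$-formula, so it has exactly the two free variables $x$ and $y$ and therefore rank $2$. Since $H$ is an isomorphism, \prettyref{l:respects} tells us that $H$ respects $\alpha$; unwinding the definition of ``respects a formula'' with the argument tuple $\textbf{a}=(a,b)\in A^{2}=A^{r(\alpha)}$, and noting that component-wise application gives $H(\textbf{a})=(H(a),H(b))$, yields precisely the displayed equivalence. Quantifying this equivalence over all $\alpha\in c\text-Fm_L$ then gives the equality of the two c-types, which is exactly the conclusion of the lemma.

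I do not expect a genuine obstacle here: the content of the Isomorphism Lemma is just the special case of \prettyref{l:respects} obtained by restricting attention to the fragment $c\text-Fm_L$ and evaluating both structures at the matched pair $(a,b)$ versus $(H(a),H(b))$. The only points worth spelling out are that the hypothesis ``$H$ is an isomorphism'' is exactly what licenses the appeal to \prettyref{l:respects}, and that no separate use of injectivity or surjectivity of $H$ is needed beyond what is already packaged inside that lemma, since we never range over all elements of $B$ but only evaluate at a single tuple. Accordingly I would present the proof in just a few lines: cite \prettyref{l:respects} for the per-formula equivalence and conclude by taking the union/intersection description of the c-type over $\alpha\in c\text-Fm_L$.
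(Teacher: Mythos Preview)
Your proposal is correct and matches the paper's own proof exactly: the paper simply states that the lemma is ``a direct consequence of \prettyref{l:respects}'', and you have merely unpacked that one-line reference in detail. There is no difference in approach.
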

\begin{proof} A direct consequence of \prettyref{l:respects}.
\end{proof}

\begin{theorem}[First Isomorphism Theorem]\label{t:FIT} For any isomorphism $H:\mathfrak A\to\mathfrak B$ and any elements $a,b\in A$, we have
\begin{align*} 
	\mathfrak{(A,B)} \models a:b::H(a):H(b).
\end{align*}
\end{theorem}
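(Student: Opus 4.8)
The plan is to unwind \prettyref{d:abcd}(4): proving $a:b::_{\mathfrak{(A,B)}}H(a):H(b)$ amounts to establishing the four arrow proportions $a\to b\righttherefore_{\mathfrak{(A,B)}}H(a)\to H(b)$, $\ b\to a\righttherefore_{\mathfrak{(A,B)}}H(b)\to H(a)$, $\ H(a)\to H(b)\righttherefore_{\mathfrak{(B,A)}}a\to b$, and $\ H(b)\to H(a)\righttherefore_{\mathfrak{(B,A)}}b\to a$. The engine is the Isomorphism \prettyref{l:IL}, which gives $\uparrow_\mathfrak A(a\to b)=\ \uparrow_\mathfrak B(H(a)\to H(b))$ and $\uparrow_\mathfrak A(b\to a)=\ \uparrow_\mathfrak B(H(b)\to H(a))$. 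The core observation is then that in each of the four arrow proportions the intersection of the two c-types that defines the arrow proportion collapses to a single common c-type; once that happens, the maximality clause becomes vacuous, exactly as in the proof of p-reflexivity.

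Concretely, I would treat $a\to b\righttherefore_{\mathfrak{(A,B)}}H(a)\to H(b)$ in detail and note the other three are identical after the obvious substitutions. By \prettyref{l:IL},
\[
\uparrow_{\mathfrak{(A,B)}}(a\to b\righttherefore H(a)\to H(b))=\ \uparrow_\mathfrak A(a\to b)\ \cap \uparrow_\mathfrak B(H(a)\to H(b))=\ \uparrow_\mathfrak A(a\to b),
\]
and likewise $\uparrow_\mathfrak A(a\to b)\ \cup \uparrow_\mathfrak B(H(a)\to H(b))=\ \uparrow_\mathfrak A(a\to b)$. From here the argument copies the one for p-reflexivity (see \prettyref{equ:abab}): if this set consists only of trivial justifications, clause (3a) of \prettyref{d:abcd} applies; otherwise it contains a non-trivial justification, and for every $d'\in B$ one has
\[
\uparrow_{\mathfrak{(A,B)}}(a\to b\righttherefore H(a)\to d')=\ \uparrow_\mathfrak A(a\to b)\ \cap \uparrow_\mathfrak B(H(a)\to d')\ \subseteq\ \uparrow_\mathfrak A(a\to b)=\ \uparrow_{\mathfrak{(A,B)}}(a\to b\righttherefore H(a)\to H(b)),
\]
so the target set is trivially $b$-maximal and clause (3b) holds. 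The three remaining arrow proportions follow by the same computation (for those headed by $H(a),H(b)$ in $\mathfrak B$ one reads \prettyref{l:IL} in the form $\uparrow_\mathfrak B(H(a)\to H(b))=\ \uparrow_\mathfrak A(a\to b)$, which is exactly the symmetric reading of the lemma).

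I do not anticipate a real obstacle here; the theorem is essentially p-reflexivity transported across the isomorphism by \prettyref{l:IL}. The only point requiring a moment's care is the bookkeeping around trivial justifications: one must apply the dichotomy ``only trivial'' versus ``contains a non-trivial justification'' of \prettyref{d:abcd}(3) to the set $\uparrow_\mathfrak A(a\to b)=\ \uparrow_\mathfrak B(H(a)\to H(b))$, and check that in the non-trivial case the inclusion above is strict over $\emptyset_{\mathfrak{(A,B)}}$ — both of which are immediate from the displayed identity.
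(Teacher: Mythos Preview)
Your proposal is correct and follows essentially the same route as the paper: use the Isomorphism \prettyref{l:IL} to collapse $\uparrow_{\mathfrak{(A,B)}}(a\to b\righttherefore H(a)\to H(b))$ to $\uparrow_\mathfrak A(a\to b)$, split into the trivial/non-trivial cases of \prettyref{d:abcd}(3), and observe that maximality is automatic because any competing intersection is contained in $\uparrow_\mathfrak A(a\to b)$. Your write-up is in fact a bit more explicit than the paper's (you spell out all four arrow proportions and the analogy with p-reflexivity), but the argument is the same; one cosmetic nit: the maximality should be called ``$H(b)$-maximal'' rather than ``$b$-maximal''.
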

\begin{proof} Requires only minor adaptations of the proof of the First Isomorphism Theorem in \cite{Antic22}.

If $\uparrow_\mathfrak A(a\to b)\ \cup \uparrow_\mathfrak B(H(a)\to H(b))$ consists only of trivial justifications, we are done. 

Otherwise, there is at least one non-trivial justification $\alpha$ in $\uparrow_\mathfrak A(a\to b)$ or in $\uparrow_\mathfrak B(H(a)\to H(b))$, in which case the Isomorphism \prettyref{l:IL} implies that $\alpha$ is in both $\uparrow_\mathfrak A(a\to b)$ \textit{and} $\uparrow_\mathfrak B(H(a)\to H(b))$, which means that $\uparrow_{\mathfrak{(A,B)}}(a\to b\righttherefore H(a)\to H(b))$ contains at least one non-trivial justification as well. 

We proceed by showing that $\uparrow_{\mathfrak{(A,B)}}(a\to b\righttherefore H(a)\to H(b))$ is $H(b)$-maximal:
\begin{align*}
    \uparrow_{\mathfrak{(A,B)}}(a\to b\righttherefore H(a)\to H(b))
    	&=\ \uparrow_\mathfrak A(a\to b)\quad\text{(Isomorphism \prettyref{l:IL})}\\
    	&\supseteq\ \uparrow_{\mathfrak{(A,B)}}(a\to b)\ \cap \uparrow_{\mathfrak{(A,B)}}(H(a)\to d)\\
    	&=\ \uparrow_{\mathfrak{(A,B)}}(a\to b\righttherefore  H(a)\to d),
\end{align*} for every $d\in B$. An analogous argument shows the remaining directed proportions.
\end{proof}

\begin{theorem}[Second Isomorphism Theorem]\label{t:SIT} For any elements $a,b,c,d\in A$ and any isomorphism $H:\mathfrak A\to \mathfrak B$, we have
\begin{align*} 
	\mathfrak A \models a:b::c:d \quad\Leftrightarrow\quad \mathfrak B \models H(a):H(b)::H(c):H(d).
\end{align*}
\end{theorem}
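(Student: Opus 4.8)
The plan is to unfold both sides through \prettyref{d:abcd} and reduce to the single arrow-level equivalence
\[
a\to b\righttherefore_\mathfrak A c\to d \quad\Longleftrightarrow\quad H(a)\to H(b)\righttherefore_\mathfrak B H(c)\to H(d), \qquad \text{for all } a,b,c,d\in A .
\]
Indeed, $a:b::_\mathfrak A c:d$ is by definition the conjunction of the four arrow proportions $a\to b\righttherefore_\mathfrak A c\to d$, $b\to a\righttherefore_\mathfrak A d\to c$, $c\to d\righttherefore_\mathfrak A a\to b$, $d\to c\righttherefore_\mathfrak A b\to a$, and $H(a):H(b)::_\mathfrak B H(c):H(d)$ is the conjunction of the corresponding four proportions in $\mathfrak B$; each matching pair is an instance of the displayed equivalence with the roles of $a,b,c,d$ permuted, so proving it for arbitrary four elements of $A$ suffices.

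Two consequences of the Isomorphism \prettyref{l:IL} will do the work. First, the trivial sets coincide, $\emptyset_\mathfrak A=\emptyset_\mathfrak B$: a c-formula $\alpha$ is trivial in $\mathfrak A$ iff $\alpha\in\ \uparrow_\mathfrak A(a'\to b')$ for all $a',b'\in A$, iff (by \prettyref{l:IL}) $\alpha\in\ \uparrow_\mathfrak B(H(a')\to H(b'))$ for all $a',b'\in A$, iff (since $H$ is a bijection of $A$ onto $B$) $\alpha$ lies in the c-type of every arrow of $\mathfrak B$, iff $\alpha$ is trivial in $\mathfrak B$. Second, applying \prettyref{l:IL} to $a\to b$ and to $c\to d'$ gives
\[
\uparrow_{\mathfrak{(A,A)}}(a\to b\righttherefore c\to d') = \ \uparrow_{\mathfrak{(B,B)}}(H(a)\to H(b)\righttherefore H(c)\to H(d')) \qquad\text{for every } d'\in A .
\]
Next I would fix $a,b,c,d\in A$ and expand $a\to b\righttherefore_\mathfrak A c\to d$ via \prettyref{d:abcd}. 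In the first case $\uparrow_\mathfrak A(a\to b)\ \cup\ \uparrow_\mathfrak A(c\to d)=\emptyset_\mathfrak A$, which by the two facts above is literally the condition $\uparrow_\mathfrak B(H(a)\to H(b))\ \cup\ \uparrow_\mathfrak B(H(c)\to H(d))=\emptyset_\mathfrak B$. In the second case $\uparrow_{\mathfrak{(A,A)}}(a\to b\righttherefore c\to d)$ contains a non-trivial justification and is $d$-maximal among the family $\bigl(\uparrow_{\mathfrak{(A,A)}}(a\to b\righttherefore c\to d')\bigr)_{d'\in A}$; by the displayed equality together with $\emptyset_\mathfrak A=\emptyset_\mathfrak B$ non-triviality transfers, and since $d'\mapsto H(d')$ is a bijection $A\to B$ it identifies that family termwise with $\bigl(\uparrow_{\mathfrak{(B,B)}}(H(a)\to H(b)\righttherefore H(c)\to e')\bigr)_{e'\in B}$, so the $d$-maximality of the $d$-th member is equivalent to the $H(d)$-maximality of the $H(d)$-th member. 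This establishes the arrow-level equivalence, and applying it to the four arrow proportions listed above proves the theorem.

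I expect the only delicate points — and the place where a careless proof would go wrong — to be that triviality is relative to a structure (so one really must prove $\emptyset_\mathfrak A=\emptyset_\mathfrak B$, rather than treat trivial justifications as absolute) and that the maximality clause of \prettyref{d:abcd} quantifies $d'$ over the whole universe, which is why one needs the surjectivity of $H$, not merely its injectivity, to conclude that the two indexed families of c-types agree. Everything else is bookkeeping on top of \prettyref{l:IL}.
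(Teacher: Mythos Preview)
Your proof is correct and follows essentially the same approach as the paper: both arguments rest on the Isomorphism \prettyref{l:IL}, which identifies the c-types $\uparrow_\mathfrak A(a\to b)$ and $\uparrow_\mathfrak B(H(a)\to H(b))$, and then read off the conclusion. The paper's proof is a one-line sketch that stops at these equalities and declares the result immediate; you have spelled out the parts it leaves implicit, namely that the trivial sets $\emptyset_\mathfrak A$ and $\emptyset_\mathfrak B$ coincide and that surjectivity of $H$ is what makes the $d$-maximality clause in \prettyref{d:abcd} transfer. These are exactly the right details to check, so your version is a strictly more complete execution of the same idea.
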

\begin{proof}  An immediate consequence of the Isomorphism \prettyref{l:IL} which yields
\begin{align*} 
	&\uparrow_\mathfrak A(a\to b) = \ \uparrow_\mathfrak C(H(a)\to H(b)),\\
	&\uparrow_\mathfrak B(c\to d) = \ \uparrow_\mathfrak D(H(c)\to H(d)).
\end{align*}
\end{proof}

\begin{remark} Proportion-preserving functions have been studied in an algebraic setting in \cite{Couceiro23}.
\end{remark}

\section{Equational proportion theorem}\label{§:EPT}

In this section, we show that under certain conditions, \textit{equational dependencies} lead to an analogical proportion. That is, in some cases we expect $a:b::c:d$ to hold if $t(a,b)=t(c,d)$, for some term function $t$ --- notice that in order for the equality to make sense, $t(a,b)$ and $t(c,d)$ have to be from the same domain. The next theorem establishes a context in which this implication holds:

\begin{theorem}[Equational Proportion Theorem]\label{t:EPT} Let $a,b,c,d\in A$ and let $\mathfrak A$ be an $L$-structure.
\begin{enumerate}
\item For any c-term $t(x,y)$, if
\begin{align*} 
	t^\mathfrak A(a,b)=t^\mathfrak A(c,d),
\end{align*} and if for all $d'\neq d\in A$, we have
\begin{align*} 
	t^\mathfrak A(a,b)\neq t^\mathfrak A(c,d'),
\end{align*} then
\begin{align*} 
	\mathfrak A \models a\to b\righttherefore c\to d.
\end{align*}

\item Consequently, for any c-terms $t_a,t_b,t_c,t_d$, if
\begin{align*} 
	&t_a^\mathfrak A(a,b)=t_a^\mathfrak A(c,d) \quad\text{and}\quad t_a^\mathfrak A(a',b)\neq t_a^\mathfrak A(c,d),\quad\text{for all $a'\neq a$},\\
	&t_b^\mathfrak A(a,b)=t_b^\mathfrak A(c,d) \quad\text{and}\quad t_b^\mathfrak A(a,b)'\neq t_b^\mathfrak A(c,d),\quad\text{for all $b'\neq b$},\\
	&t_c^\mathfrak A(a,b)=t_c^\mathfrak A(c,d) \quad\text{and}\quad t_c^\mathfrak A(a,b)\neq t_c^\mathfrak A(c',d),\quad\text{for all $c'\neq c$},\\
	&t_d^\mathfrak A(a,b)=t_d^\mathfrak A(c,d) \quad\text{and}\quad t_d^\mathfrak A(a,b)\neq t_d^\mathfrak A(c,d)',\quad\text{for all $d'\neq d$},
\end{align*} then
\begin{align*} 
	\mathfrak A \models a:b::c:d.
\end{align*}
\end{enumerate}
\end{theorem}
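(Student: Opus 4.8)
The plan is to prove part (1) by exhibiting a single characteristic justification built from the c-term $t$, and then to derive part (2) by four applications of part (1); so the real work is in part (1).

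For part (1) I would first dispose of the degenerate situation in which $\uparrow_{\mathfrak A}(a\to b)\cup\uparrow_{\mathfrak A}(c\to d)$ contains only trivial justifications — then $a\to b\righttherefore_{\mathfrak A}c\to d$ holds by the first alternative of item~(3) of \prettyref{d:abcd} (and the subcase $|A|=1$, where $a=b=c=d$, is immediate). Otherwise, put $e:=t^{\mathfrak A}(a,b)=t^{\mathfrak A}(c,d)$ and consider the atomic formula $\alpha(x,y):\equiv(t(x,y)=\underline e)$, where $\underline e$ is a closed $L$-term evaluating to $e$. Since $t$ is a c-term, both $x$ and $y$ occur in the unique atom of $\alpha$, so the dependency graph of $\alpha$ is connected (it contains the edge $\{x,y\}$); hence $\alpha\in c\text-Fm_L$. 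By the two hypotheses we get $\mathfrak A\models\alpha(a,b)$, $\mathfrak A\models\alpha(c,d)$, and $\mathfrak A\not\models\alpha(c,d')$ for every $d'\neq d$; the first two say $\alpha\in\,\uparrow_{\mathfrak A}(a\to b\righttherefore c\to d)$, and the last says that $\alpha$ is non-trivial and that $\{\alpha\}$ is a characteristic set of justifications of $a\to b\righttherefore c\to d$. I would then read off $d$-maximality directly: for any $d'\in A$ with $\emptyset_{\mathfrak A}\subsetneq\,\uparrow_{\mathfrak A}(a\to b\righttherefore c\to d)\subseteq\,\uparrow_{\mathfrak A}(a\to b\righttherefore c\to d')$, the justification $\alpha$ lies in the larger set, so $\mathfrak A\models\alpha(c,d')$, which forces $d'=d$ and hence equality of the two sets; together with the non-trivial justification $\alpha$ this is exactly the second alternative of item~(3) of \prettyref{d:abcd}, so $a\to b\righttherefore_{\mathfrak A}c\to d$.

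For part (2) I would unfold $a:b::_{\mathfrak A}c:d$ into its four defining arrow proportions $a\to b\righttherefore_{\mathfrak A}c\to d$, $b\to a\righttherefore_{\mathfrak A}d\to c$, $c\to d\righttherefore_{\mathfrak A}a\to b$, $d\to c\righttherefore_{\mathfrak A}b\to a$, and obtain each from part~(1). The first follows from part~(1) with the c-term $t_d$ and the third with $t_b$; for the second and fourth I would apply part~(1) with the c-terms $t_c(y,x)$ and $t_a(y,x)$ obtained by interchanging the two variables (which are again c-terms), so that the coordinate being varied lands in the last slot. In each case one only has to check that the displayed equalities and inequalities for $t_a,t_b,t_c,t_d$ match the two hypotheses of part~(1) after the relevant renaming of $a,b,c,d$ — a routine bookkeeping step.

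The step I expect to be the main obstacle is the construction of $\alpha$ in part~(1): what is needed is a connected, negation- and disjunction-free formula in exactly the variables $x,y$ whose truth at a pair $(u,v)$ is equivalent to $t^{\mathfrak A}(u,v)=e$. The delicate point is \emph{naming} the element $e$: the atom $t(x,y)=\underline e$ presupposes a closed $L$-term $\underline e$ with $\underline e^{\mathfrak A}=e$, which is available whenever the elements involved are term-definable (as they are in the concrete structures studied later in the paper). Granted this witness $\alpha$, everything else is a direct unwinding of \prettyref{d:abcd}.
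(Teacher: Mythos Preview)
Your proposal is correct and follows essentially the same route as the paper: build the single c-formula $\alpha(x,y)\equiv\bigl(t(x,y)=t^{\mathfrak A}(a,b)\bigr)$, check it is a characteristic justification of $a\to b\righttherefore c\to d$, and deduce part~(2) from four instances of part~(1). The paper does not spell out your variable-swap bookkeeping for the four arrow proportions, nor your degenerate-case split; it simply says the second item is ``an immediate consequence of the first''.

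One remark on the point you single out as the main obstacle: the paper writes $\alpha(x,y):\equiv(t(x,y)=t^{\mathfrak A}(a,b))$ without pausing over how the element $t^{\mathfrak A}(a,b)$ is named by an $L$-term, so your worry about the availability of a closed term $\underline e$ is in fact a scruple the paper itself does not raise. Your reading---that this is unproblematic in the concrete structures treated later and is tacitly assumed in general---matches how the paper uses the result.
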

\begin{proof} Since $t$ is a c-term and thus contains both variables $x$ and $y$, the formula
\begin{align*} 
	\alpha(x,y):\equiv (t(x,y)=t^ \mathfrak A(a,b))
\end{align*} is a c-formula. By assumption, we know that
\begin{align*} 
	t^\mathfrak A(a,b)=t^\mathfrak A(c,d),
\end{align*} which shows
\begin{align*} 
	\mathfrak A\models \alpha(a,b) \quad\text{and}\quad \mathfrak A\models \alpha(c,d).
\end{align*} This shows that $\alpha$ is a justification of $a\to b\righttherefore c\to d$ in $\mathfrak A$. It remains to show that it is a characteristic justification. For any $d'\neq d\in A$, by assumption we have
\begin{align*} 
	t^\mathfrak A(a,b)\neq t^\mathfrak A(c,d'),
\end{align*} which shows
\begin{align*} 
	\mathfrak A\not\models \alpha(c,d').
\end{align*} That is, $\alpha$ is \textit{not} a justification of $a\to b\righttherefore c\to d'$, for any $d'\neq d$. Thus, $\alpha$ is indeed a characteristic justification of $\mathfrak A \models a\to b\righttherefore c\to d$.

The second Item is an immediate consequence of the first.
\end{proof}

\begin{corollary} For any words $\mathbf{a,b,c,d}\in A^\ast$ over some alphabet $A$,\footnote{As usual, $\mathbf{ab}$ stands for the concatenation of the words $\textbf{a}$ and $\mathbf b$ and $A^\ast$ denotes the set of all words over $A$ including the empty word.}
\begin{align*} 
	\mathbf{ab = cd} \quad\Rightarrow\quad (A^\ast,\cdot) \models \textbf{a}: \mathbf b:: \mathbf c: \mathbf d.
\end{align*}
\end{corollary}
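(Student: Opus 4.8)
The plan is to derive the corollary from the second item of \prettyref{t:EPT} by choosing all four c-terms to be the single term encoding concatenation. Regard $\mathfrak A:=(A^\ast,\cdot)$ as an $L$-structure in a language $L$ with one binary function symbol $\cdot$ interpreted as word concatenation, and put
\begin{align*}
  t_{\mathbf a}(x,y):\equiv t_{\mathbf b}(x,y):\equiv t_{\mathbf c}(x,y):\equiv t_{\mathbf d}(x,y):\equiv x\cdot y.
\end{align*}
Since $x\cdot y$ contains both free variables $x$ and $y$, it is a c-term in the sense of \prettyref{§:AP}, and its induced term function satisfies $(x\cdot y)^\mathfrak A(\mathbf u,\mathbf v)=\mathbf{uv}$ for all $\mathbf u,\mathbf v\in A^\ast$.

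From the hypothesis $\mathbf{ab}=\mathbf{cd}$ the four equalities required by \prettyref{t:EPT} hold at once, since each of them reads $\mathbf{ab}=\mathbf{cd}$ after evaluating the term function. For the four accompanying inequalities I would invoke the fact that the free monoid $(A^\ast,\cdot)$ is cancellative on both sides, i.e. $\mathbf{ub}=\mathbf{wb}$ implies $\mathbf u=\mathbf w$ and $\mathbf{au}=\mathbf{aw}$ implies $\mathbf u=\mathbf w$. Thus $\mathbf a'\mathbf b\neq\mathbf{ab}$ for every $\mathbf a'\neq\mathbf a$, $\mathbf a\mathbf b'\neq\mathbf{ab}$ for every $\mathbf b'\neq\mathbf b$, $\mathbf c'\mathbf d\neq\mathbf{cd}$ for every $\mathbf c'\neq\mathbf c$, and $\mathbf c\mathbf d'\neq\mathbf{cd}$ for every $\mathbf d'\neq\mathbf d$; rewriting $\mathbf{ab}=\mathbf{cd}$ where needed, these are exactly the uniqueness hypotheses of \prettyref{t:EPT}. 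Hence that theorem yields $\mathbf a:\mathbf b::_{(A^\ast,\cdot)}\mathbf c:\mathbf d$.

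I do not anticipate a genuine obstacle here: the content reduces to the (standard) cancellativity of the free monoid, which follows from the uniqueness of the factorization of a word into letters, together with the trivial observation that $x\cdot y$ is a c-term. The only care needed is bookkeeping — matching each substitution $\mathbf a\mapsto\mathbf a'$, $\mathbf b\mapsto\mathbf b'$, $\mathbf c\mapsto\mathbf c'$, $\mathbf d\mapsto\mathbf d'$ to the correct argument slot in \prettyref{t:EPT} — which is immediate since all four chosen terms coincide.
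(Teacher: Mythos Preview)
Your proposal is correct and matches the paper's intended approach: the corollary is stated immediately after \prettyref{t:EPT} without proof, so it is meant to follow by a direct application of that theorem with the single c-term $x\cdot y$ and the cancellativity of the free monoid $(A^\ast,\cdot)$, exactly as you do.
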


\begin{corollary} For any integers $a,b,c,d\in\mathbb Z$,
\begin{align*} 
	a+b = c+d \quad\Rightarrow\quad (\mathbb Z,+) \models a:b::c:d.
\end{align*}
\end{corollary}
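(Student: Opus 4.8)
The plan is to derive the statement directly from the Equational Proportion Theorem (\prettyref{t:EPT}) using the single c-term $t(x,y):\equiv x+y$ (written $+xy$ in the prefix notation of the preliminaries). This $t$ is indeed a c-term, since it contains both free variables $x$ and $y$, and the c-formula built from it in the proof of \prettyref{t:EPT} is $\alpha(x,y):\equiv \bigl(x+y=a+b\bigr)$. Because $(\mathbb Z,+)$ is a single structure, $a:b::_{(\mathbb Z,+)}c:d$ unfolds by \prettyref{d:abcd} into the four directed proportions $a\to b\righttherefore c\to d$, $b\to a\righttherefore d\to c$, $c\to d\righttherefore a\to b$, and $d\to c\righttherefore b\to a$, all taken in $(\mathbb Z,+)$; I would obtain each of them from Item~1 of \prettyref{t:EPT} (equivalently, obtain all four at once from Item~2 by choosing $t_a=t_b=t_c=t_d:=x+y$).

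For the first directed proportion the two hypotheses of Item~1 read $t^{\mathfrak A}(a,b)=t^{\mathfrak A}(c,d)$, which is precisely the assumed identity $a+b=c+d$, and, for every $d'\neq d$, $t^{\mathfrak A}(a,b)\neq t^{\mathfrak A}(c,d')$, i.e.\ $c+d=a+b\neq c+d'$, which holds because addition on $\mathbb Z$ is cancellative. The other three directed proportions are verified the same way after permuting $a,b,c,d$: the equality hypothesis is in each case $a+b=c+d$ up to the commutativity of $+$ (rewriting $b+a$ as $a+b$ and $d+c$ as $c+d$), and the non-equality hypothesis reduces, again by cancellativity, to the assertion that the argument being varied differs from the original. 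Feeding these into \prettyref{t:EPT} yields all four directed proportions, hence $a:b::_{(\mathbb Z,+)}c:d$.

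I do not expect any genuine obstacle: the substantive work has already been done in \prettyref{t:EPT}, and all that remains is to observe that $x+y$ is a c-term and to carry out four symmetric, entirely routine checks whose only ingredients are the commutativity and cancellativity of $(\mathbb Z,+)$. The mild care needed is purely bookkeeping — matching each of the four directed proportions (and, if Item~2 is used, each of its four argument-varying conditions) to the correct permutation of $a,b,c,d$.
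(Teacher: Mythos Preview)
Your proposal is correct and matches the paper's intended argument: the corollary is stated immediately after the Equational Proportion \prettyref{t:EPT} with no separate proof, so it is meant to follow by applying that theorem with the c-term $t(x,y):=x+y$, exactly as you do. Your remarks on commutativity and cancellativity are precisely what is needed to verify the four instances of Item~1 (equivalently Item~2 with $t_a=t_b=t_c=t_d=x+y$).
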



\section{Equational fragment}\label{§:EF}

In many instances, it makes sense to study a restricted fragment of the full framework by syntactically restricting justifications. In this section, we look at the equational fragment consisting only of justifications of the simple form $s=t$, for some terms $s$ and $t$, and reprove the Difference Proportion Theorem in \cite{Antic22-2} from that perspective (\prettyref{t:DPT}). This provides further conceptual evidence for the robustness of the underlying framework.

\begin{definition} Let $s$ and $t$ be $L$-terms in two variables $x$ and $y$.\todo{Wo muessen $x$ und $y$ vorkommen?} Define the \textit{\textbf{set of equational justifications}} (or \textit{\textbf{e-justifications}}) of an arrow $a\to b$ in $\mathfrak A$ by
\begin{align*} 
	\uparrow^e_ \mathfrak A(a\to b):=\left\{s(x,y)=t(x,y)\in c\text-Fm_L \;\middle|\; \mathfrak A\models s(a,b)=t(a,b)\right\}.
\end{align*} extended to an arrow proportion $a\to b\righttherefore c\to d$ in a pair of $L$-structures $\mathfrak{(A,B)}$ by
\begin{align*} 
	\uparrow^e_ \mathfrak{(A,B)}(a\to b\righttherefore c\to d) :=\ \uparrow^e_ \mathfrak A(a\to b)\ \cap \uparrow^e_ \mathfrak B(c\to d).
\end{align*} The \textit{\textbf{analogical proportion relation}} $::_e$ is defined as $::$ in \prettyref{d:abcd} with $\uparrow$ replaced by $\uparrow^e$ and with the notion of triviality adapted accordingly.
\end{definition}

\begin{theorem}[Difference Proportion Theorem]\label{t:DPT} For any $a,b,c,d\in\mathbb N$,
\begin{align*} 
	(\mathbb N,S) \models_e a:b::c:d \quad\Leftrightarrow\quad a-b=c-d\quad\text{\textit{\textbf{(difference proportion)}}}.
\end{align*}
\end{theorem}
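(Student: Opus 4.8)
The plan is to first determine which connected formulas actually occur in the equational fragment over $(\mathbb{N},S)$. Since $S$ is the only non-logical symbol and it is unary, every $L$-term in the variables $x,y$ has the form $S^m x$ or $S^n y$ for some $m,n\ge 0$; because a c-formula must contain both $x$ and $y$, the equational c-formulas are (up to the symmetry of equality) exactly the formulas $S^m x = S^n y$. In $(\mathbb{N},S)$ one has $(\mathbb{N},S)\models S^m a = S^n b$ iff $m+a = n+b$, i.e. iff $n-m = a-b$ as integers, so
\begin{align*}
	\uparrow^e_{(\mathbb{N},S)}(a\to b) = \{\, S^m x = S^n y : m,n\ge 0 \text{ and } n-m = a-b \,\}.
\end{align*}
This set is always non-empty and depends only on the integer $a-b$; moreover no formula $S^m x = S^n y$ is trivial (it fails on the pair $(0,0)$ when $m\ne n$, and on $(1,0)$ when $m=n$), so $\emptyset_{(\mathbb{N},S),e}=\emptyset$ and clause (a) in the definition of $\righttherefore$ never applies in this fragment.

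Next I would establish the arrow-level equivalence $a\to b\righttherefore_{(\mathbb{N},S),e}\, c\to d \iff a-b = c-d$. For the direction $(\Leftarrow)$: if $a-b = c-d$ then $\uparrow^e_{(\mathbb{N},S)}(a\to b) = \uparrow^e_{(\mathbb{N},S)}(c\to d)$, so $\uparrow^e_{(\mathbb{N},S)}(a\to b\righttherefore c\to d) = \uparrow^e_{(\mathbb{N},S)}(a\to b)$ is non-empty and, since $\uparrow^e_{(\mathbb{N},S)}(a\to b\righttherefore c\to d')\subseteq \uparrow^e_{(\mathbb{N},S)}(a\to b)$ for every $d'\in\mathbb{N}$, it is $d$-maximal --- this is precisely the pattern already used for \prettyref{equ:abab} and in \prettyref{t:FIT}. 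For $(\Rightarrow)$ I argue by contraposition: if $a-b\ne c-d$, then no pair $(m,n)$ satisfies both $n-m = a-b$ and $n-m = c-d$, hence $\uparrow^e_{(\mathbb{N},S)}(a\to b\righttherefore c\to d)=\emptyset$; since the union $\uparrow^e_{(\mathbb{N},S)}(a\to b)\cup\uparrow^e_{(\mathbb{N},S)}(c\to d)$ is non-empty and consists only of non-trivial justifications, neither clause (a) nor clause (b) can hold, so $a\to b\righttherefore_{(\mathbb{N},S),e}\, c\to d$ fails.

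Finally I would unfold the definition of $::_e$ for the single structure $(\mathbb{N},S)$: the proportion $a:b::_{(\mathbb{N},S),e} c:d$ is the conjunction of the four arrow proportions $a\to b\righttherefore c\to d$, $b\to a\righttherefore d\to c$, $c\to d\righttherefore a\to b$, and $d\to c\righttherefore b\to a$ (all in $(\mathbb{N},S)$), which by the previous step are equivalent, respectively, to $a-b = c-d$, $b-a = d-c$, $c-d = a-b$, and $d-c = b-a$; and these four equalities are all equivalent to the single difference equation $a-b = c-d$. This yields the claimed equivalence.

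The step I expect to demand the most care is the first one: making precise that connectedness restricts the equational justifications to the shape $S^m x = S^n y$ (in particular that single-variable equations are not $2$-formulas), and verifying that this fragment over $(\mathbb{N},S)$ contains no trivial justifications. That observation is what collapses the two-clause definition of $\righttherefore$ to the pure maximality condition and lets the remainder of the argument reuse the familiar "the directed type is maximal because it already equals the full type" reasoning from the reflexivity and isomorphism proofs.
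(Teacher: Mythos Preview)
Your proposal is correct and follows essentially the same approach as the paper: both identify the equational c-justifications in $(\mathbb N,S)$ as exactly the formulas $S^m x=S^n y$ and observe that the e-type of $a\to b$ depends only on the integer $a-b$. The only difference is packaging: the paper argues both directions via explicit \emph{characteristic} justifications (and a case split on $a\le b$ versus $b<a$), whereas you use the cleaner ``the intersection equals the full type, hence is $d$-maximal'' pattern for $(\Leftarrow)$ and a direct contraposition (disjoint types $\Rightarrow$ empty intersection) for $(\Rightarrow)$; your route avoids the case distinction but is otherwise the same argument.
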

\begin{proof} Let us first compute the e-justifications in $(\mathbb N,S)$:
\begin{align*} 
	\uparrow^e_{(\mathbb N,S)}(a\to b)
		&=\left\{S^k(x)=S^\ell(y) \;\middle|\; S^k(a)=S^\ell(b),\;k,\ell\geq 0\right\}=	
		\begin{cases}
			\left\{S^{b-a+m}(x)=S^m(y) \;\middle|\; m\geq 0\right\} & a\leq b,\\
			\left\{S^m(x)=S^{a-b+m}(y) \;\middle|\; m\geq 0\right\} & b<a.
		\end{cases}
\end{align*}

$(\Rightarrow)$ Every equational justification of the form $S^k(x)=S^\ell(y)$ of $a\to b\righttherefore c\to d$ in $(\mathbb N,S)$ is a characteristic justification by the following argument: for every $d'\in\mathbb N$, we have
\begin{align*} 
	S^k(x) = S^\ell(y)\in\ \uparrow^e_{(\mathbb N,S)}(a\to b\righttherefore c\to d) \quad&\Leftrightarrow\quad c+k = \ell+d,\\
	S^k(x) = S^\ell(y)\in\ \uparrow^e_{(\mathbb N,S)}(a\to b\righttherefore c\to d') \quad&\Leftrightarrow\quad c+k = \ell+d'
\end{align*} which implies $d=d'$. 

Since $\uparrow^e_{(\mathbb N,S)}(a\to b)$ is non-empty, for all $a,b\in\mathbb N$, there must be some non-trivial equational justification $\alpha$ in $\uparrow^e_{(\mathbb N,S)}(a\to b\righttherefore c\to d)$. We distinguish two cases:
\begin{enumerate}
	\item If $\alpha\equiv (S^{b-a+m}(x)=S^m(y))$, for some $m\geq 0$, we have
	\begin{align*} 
		S^{b-a+m}(c) = S^m(x) \quad\Leftrightarrow\quad c+b-a+m = d+m \quad\Leftrightarrow\quad a-b = c-d.
\end{align*}
	\item If $\alpha\equiv (S^m(x) = S^{a-b+m}(y))$, for some $m\geq 0$, we have
	\begin{align*} 
		S^m(c) = S^{a-b+m}(d) \quad\Leftrightarrow\quad c+m = d+a-b+m \quad\Leftrightarrow\quad a-b = c-d.
\end{align*}
\end{enumerate}

$(\Leftarrow)$ We distinguish two cases:
\begin{enumerate}
	\item If $a\leq b$, we must have $c\leq d$ and
	\begin{align*} 
		S^{b-a+m}(x) = S^m(y)\in\ \uparrow^e_{(\mathbb N,S)}(a\to b\righttherefore c\to d).
\end{align*} Since $S^{b-a+m}(x)=S^m(y)$ is a characteristic equational justification, we have deduced
	\begin{align*} 
		(\mathbb N,S) \models a\to b\righttherefore_e\, c\to d \quad\text{and}\quad (\mathbb N,S) \models c\to d\righttherefore_e\, a\to b.
\end{align*} Analogously, the equational justification $S^{b-a+m}(y)=S^m(x)$ characteristically justifies
\begin{align*} 
		(\mathbb N,S) \models b\to a\righttherefore_e\, d\to c \quad\text{and}\quad (\mathbb N,S) \models d\to c\righttherefore_e\, b\to a.
\end{align*} We have thus shown
	\begin{align*} 
		(\mathbb N,S) \models a:b::_e c:d.
\end{align*}
	\item The case $b<a$ and $d<c$ is analogous.
\end{enumerate}
\end{proof}

\section{Graphs}\label{§:Graphs}

An (\textit{\textbf{undirected}}) \textit{\textbf{graph}} is a relational structure $\mathfrak G=(V_\mathfrak G,E_\mathfrak G)$, where $V_\mathfrak G$ is a set of \textit{\textbf{vertices}} of $\mathfrak G$ and $E_\mathfrak G$ consists of one or two-element sets of vertices denoting (\textit{\textbf{undirected}}) \textit{\textbf{edges}} between vertices of $\mathfrak G$. We write $a\text{ --- }_\mathfrak G\, b$ in case there is an edge between $a$ and $b$ in $\mathfrak G$. A graph $\mathfrak F$ is a \textit{\textbf{subgraph}} of $\mathfrak G$ iff $V_\mathfrak F\subseteq V_\mathfrak G$ and $E_ \mathfrak F\subseteq E_\mathfrak G$. A \textit{\textbf{path}} in a graph is a finite or infinite sequence of edges which joins a sequence of vertices. 
We write $a\stackrel{n}{\text{ --- }}_\mathfrak G b$ iff there is an (undirected) path of length $n$ between $a$ and $b$ in $\mathfrak G$, and we write $a\stackrel{\ast}{\text{ --- }}_\mathfrak G b$ iff there is some $n\geq 0$ such that $a\stackrel{n}{\text{ --- }}_\mathfrak G b$. A graph is \textit{\textbf{connected}} iff it contains a path between any two vertices. Given a first-order formula $\alpha$, we write $\mathfrak G\models \alpha$ in case $\alpha$ holds in $\mathfrak G$.

\begin{definition} Define the \textit{\textbf{$0$-path formula}} by
\begin{align*} 
	\pi_0(x,y):\equiv (x=y),
\end{align*} the \textit{\textbf{$1$-path formula}} by
\begin{align*} 
	\pi_1(x,y):\equiv (xEy),
\end{align*} and the \textit{\textbf{$n$-path formula}}, $n\geq 2$, by
\begin{align*} 
	\pi_n(x,y):\equiv (\exists z_1,\ldots,z_{n-1})(xEz_1\land\ldots\land z_{n-1}Ey).
\end{align*} The formula speaks for itself:
\begin{align*} 
	\mathfrak G\models\pi_n(a,b) 
		\quad&\Leftrightarrow\quad  \text{there is an (undirected) path of length $n$ from $a$ to $b$ in $\mathfrak G$}\\ 
		\quad&\Leftrightarrow\quad a\stackrel{n}{\text{ --- }}_\mathfrak G b.
\end{align*} We denote the set of all $n$-path formulas by $n\text-Fm$ and define the set of all \textit{\textbf{path formulas}} by $$PFm:=\bigcup_{n\geq 0}n\text-Fm.$$
\end{definition}
 
It should be mentioned that every path formula $\pi(x,y)$ is a connected formula in the sense of \prettyref{§:AP} as it contains only conjunction, and the variables $x$ and $y$ are connected which means that there is a path between them in the dependency graph of $\pi$ having vertices $x,z_1,\ldots,z_n,y$ and an edge between any two variables $v,v'$ with $vEv'$ in $\pi$.

\begin{definition} Define the \textit{\textbf{path type}} of an arrow $a\to b$ in $\mathfrak G$ by
\begin{align*} 
	\uparrow^P_\mathfrak G(a\to b):=\{\pi_n\in PFm\mid \mathfrak G\models\pi_n(a,b)\},
\end{align*} extended to an arrow proportion $a\to b \righttherefore c\to d$ in $\mathfrak{(G,H)}$ by
\begin{align*} 
	\uparrow^P_\mathfrak{(G,H)}(a\to b \righttherefore c\to d) :=\ \uparrow^P_\mathfrak G(a\to b)\ \cap \uparrow^P_\mathfrak H(c\to d).
\end{align*} Let $::_P$ denote the analogical proportion relation which is defined as $::$ with $\uparrow$ replaced by $\uparrow^P$.
\end{definition}

Our first observation is that since all considered graphs are undirected, the definition of an analogical proportion in \prettyref{d:abcd} can be simplified as follows:

\begin{lemma}\label{l:simplification} For any $a,b\in V_\mathfrak G$ and $c,d\in V_\mathfrak H$, we have
\begin{align*} 
	\mathfrak{(G,H)} \models a:b::_P c:d \quad\Leftrightarrow\quad \mathfrak{(G,H)} \models a\to b\righttherefore_P\, c\to d \quad\text{and}\quad \mathfrak{(H,G)} \models c\to d\righttherefore_P\, a\to b.
\end{align*}
\end{lemma}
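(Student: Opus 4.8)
To prove \prettyref{l:simplification}, the plan is to unfold \prettyref{d:abcd}. Taking $\mathfrak A:=\mathfrak G$ and $\mathfrak B:=\mathfrak H$, the relation $a:b::_{\mathfrak{(G,H)},P}c:d$ is \emph{by definition} the conjunction of the four arrow proportions $a\to b\righttherefore_{\mathfrak{(G,H)},P}\, c\to d$, $b\to a\righttherefore_{\mathfrak{(G,H)},P}\, d\to c$, $c\to d\righttherefore_{\mathfrak{(H,G)},P}\, a\to b$ and $d\to c\righttherefore_{\mathfrak{(H,G)},P}\, b\to a$. The $(\Rightarrow)$ direction of the lemma is therefore trivial, and what has to be shown is that the first and third of these arrow proportions already imply the second and fourth.

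The only ingredient special to graphs is undirectedness: reading a path of length $n$ between $a$ and $b$ backwards gives a path of length $n$ between $b$ and $a$, so $\mathfrak G\models\pi_n(a,b)\Leftrightarrow\mathfrak G\models\pi_n(b,a)$ for every $n$, hence $\uparrow^P_\mathfrak G(a\to b)=\uparrow^P_\mathfrak G(b\to a)$ for all $a,b\in V_\mathfrak G$, and likewise in $\mathfrak H$; in other words the path type of an arrow depends only on the unordered pair of its endpoints. Consequently $\uparrow^P_{\mathfrak{(G,H)}}(b\to a\righttherefore d\to c')=\uparrow^P_\mathfrak G(a\to b)\cap\uparrow^P_\mathfrak H(c'\to d)$ for every $c'\in V_\mathfrak H$; in particular the distinguished set of the second arrow proportion equals that of the first, the ``only trivial justifications'' alternative of \prettyref{d:abcd}(3)(a) is insensitive to the reversal, and the whole question collapses to the maximality clause of \prettyref{d:abcd}(3)(b). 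There the first arrow proportion asserts $\subseteq$-maximality of $\uparrow^P_\mathfrak G(a\to b)\cap\uparrow^P_\mathfrak H(c\to d)$ inside the family $\{\uparrow^P_\mathfrak G(a\to b)\cap\uparrow^P_\mathfrak H(c\to d')\mid d'\in V_\mathfrak H\}$ \emph{anchored at $c$}, whereas the second asserts maximality of the same set inside the family \emph{anchored at $d$}.

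Reconciling these two anchors is the crux, and the step I expect to cost the most work. The plan is to move maximality from one anchor to the other by concatenating paths: since $\uparrow^P_{\mathfrak{(G,H)}}(a\to b\righttherefore c\to d)$ is non-trivial (the trivial case being disposed of immediately), fix some $\pi_k$ in it, i.e.\ a path of length $k$ from $c$ to $d$ in $\mathfrak H$; then for any $d'\in V_\mathfrak H$ a path of length $n$ from $d$ to $d'$ glues onto it to give a path of length $k+n$ from $c$ to $d'$, and conversely, and together with the ``$+2$'' stability of path types ($\pi_m$ between two vertices forces $\pi_{m+2}$ by retracing the last edge) this lets the two anchored families be compared term by term, so that maximality with anchor $c$ forces maximality with anchor $d$, which is the second arrow proportion. (In fact I expect the first arrow proportion alone to yield the second, with the third playing the symmetric role for the fourth; but since both the first and third are kept in the statement, it suffices to argue within the conjunction.) The fourth arrow proportion follows by the same argument with $\mathfrak G$ and $\mathfrak H$ exchanged, using the third arrow proportion in place of the first. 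This gives $(\Leftarrow)$ and finishes the proof.
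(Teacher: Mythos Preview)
You are right to isolate the ``anchor'' mismatch: the first arrow proportion asserts $\subseteq$-maximality of $\uparrow^P_\mathfrak G(a\to b)\cap\uparrow^P_\mathfrak H(c\to d)$ in the family indexed by $d'$ (with $c$ fixed), whereas the second one --- after applying the undirectedness symmetry --- asserts maximality of the \emph{same} set in the family indexed by $c'$ (with $d$ fixed). These are genuinely different conditions. The paper's own argument stops at the identity $\uparrow^P_{\mathfrak{(G,H)}}(a\to b\righttherefore c\to d)=\uparrow^P_{\mathfrak{(G,H)}}(b\to a\righttherefore d\to c)$ and declares the equivalence of arrow proportions, never confronting the anchor shift; so your diagnosis is in fact more careful than the paper's proof.

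The problem is with your proposed remedy. The ``glue a length-$k$ path from $c$ to $d$ onto paths out of $d$, use $+2$ stability, compare the two families term by term'' plan is only a sketch, and it cannot be completed: the implication you need is simply false. Take $\mathfrak G$ on $\{a,b\}$ with edge $a\text{ --- }b$ and a loop at $b$, and $\mathfrak H$ on $\{c,x,d\}$ with edges $c\text{ --- }x$, $x\text{ --- }d$ and a loop at $d$ (loops appear elsewhere in the section, e.g.\ in $\mathbbm 1$). Then $\uparrow^P_\mathfrak G(a\to b)=\{n\geq 1\}$ and $\uparrow^P_\mathfrak H(c\to d)=\{n\geq 2\}$, so the common set is $\{n\geq 2\}$. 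With anchor $c$ this is maximal: $\uparrow^P_\mathfrak H(c\to x)\cap\{n\geq 1\}=\{1,3,4,5,\ldots\}$ misses $2$, and $\uparrow^P_\mathfrak H(c\to c)\cap\{n\geq 1\}=\{2,4,5,6,\ldots\}$ misses $3$; hence $a\to b\righttherefore_{\mathfrak{(G,H)},P}c\to d$. Also $c\to d\righttherefore_{\mathfrak{(H,G)},P}a\to b$, since both $\uparrow^P_\mathfrak G(a\to a)=\{0,2,3,4,\ldots\}$ and $\uparrow^P_\mathfrak G(a\to b)$ meet $\{n\geq 2\}$ in the same set. So the right-hand side of the lemma holds. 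But with anchor $d$ we have $\uparrow^P_\mathfrak H(x\to d)=\{n\geq 1\}$, strictly containing $\{n\geq 2\}$ after intersecting with $\uparrow^P_\mathfrak G(b\to a)=\{n\geq 1\}$; thus $b\to a\righttherefore_{\mathfrak{(G,H)},P}d\to c$ \emph{fails}, and with it the left-hand side. So ``maximality with anchor $c$ forces maximality with anchor $d$'' is not just unproved in your write-up --- it is wrong, and the lemma itself does not hold as stated.
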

\begin{proof} Since the graphs $\mathfrak G$ and $\mathfrak H$ are undirected, we have the symmetry
\begin{align*} 
	\mathfrak G\models_P\pi_n(a,b) \quad\Leftrightarrow\quad \mathfrak G\models_P\pi_n(b,a),
\end{align*} which implies
\begin{align*} 
	\uparrow^P_\mathfrak{(G,H)}(a\to b \righttherefore c\to d)=\ \uparrow^P_\mathfrak{(G,H)}(b\to a \righttherefore d\to c),
\end{align*} and which further implies
\begin{align}\label{eq:GH1} 
	\mathfrak{(G,H)} \models a\to b \righttherefore_P\, c\to d \quad\Leftrightarrow\quad \mathfrak{(G,H)} \models b\to a \righttherefore_P\, d\to c.
\end{align} Analogously, we have
\begin{align}\label{eq:GH2}
	\mathfrak{(H,G)} \models c\to d \righttherefore_P\, a\to b \quad\Leftrightarrow\quad \mathfrak{(H,G)} \models d\to c \righttherefore_P\, b\to a.
\end{align}
\end{proof}

The symmetries in \prettyref{eq:GH1} and \prettyref{eq:GH2} show that we can simplify the notation by writing
\begin{align*} 
	a\text{ --- }b \righttherefore c\text{ --- }d \quad\text{instead of}\quad a\to b\righttherefore c\to d
\end{align*} and
\begin{align*} 
	\uparrow^P_\mathfrak G(a\text{ --- }b) \quad\text{instead of}\quad \uparrow^P_\mathfrak G(a\to b) \quad\text{and}\quad \uparrow^P_\mathfrak G(b\to a)
\end{align*} and
\begin{align*} 
	\uparrow^P_\mathfrak G(a\text{ --- }b \righttherefore c\text{ --- }d) \quad\text{instead of}\quad &\uparrow^P_\mathfrak G(a\to b \righttherefore c\to d) \quad\text{and}\quad \uparrow^P_\mathfrak G(b\to a \righttherefore d\to c).
\end{align*} We shall thus rewrite the equivalence in \prettyref{l:simplification} as
\begin{align*} 
	\mathfrak{(G,H)} \models a:b::_P c:d \quad\Leftrightarrow\quad \mathfrak{(G,H)} \models a\text{ --- }b\righttherefore_P\, c\text{ --- }d \quad\text{and}\quad \mathfrak{(H,G)} \models c\text{ --- }d\righttherefore_P\, a\text{ --- }b.
\end{align*}

Notice that the path type of any edge $a\text{ --- } b$ in $\mathfrak G$ can be identified with
\begin{align*} 
	\uparrow^P_\mathfrak G(a\text{ --- }b)=\left\{n\in \mathbb N \;\middle|\; a\stackrel{n}{\text{ --- }}_\mathfrak G b\right\},
\end{align*} extended to arrow proportions by
\begin{align*} 
	\uparrow^P_\mathfrak{(G,H)}(a\text{ --- }b \righttherefore c\text{ --- }d)=\left\{n\in \mathbb N\;\middle|\; a\stackrel{n}{\text{ --- }}_\mathfrak G b, c\stackrel{n}{\text{ --- }}_\mathfrak H d\right\}.
\end{align*} This yields the following simple characterization of the analogical proportion entailment relation:

\begin{proposition}\label{p:connectivity} For any graphs $\mathfrak{G,H}$ and vertices $a,b\in V_\mathfrak G$ and $c,d\in V_\mathfrak H$, we have
\begin{align*} 
	\mathfrak{(G,H)} \models a\text{ --- }b\righttherefore_P\, c\text{ --- }d
\end{align*} iff one of the following holds:
\begin{enumerate}
	\item There is neither a path between $a$ and $b$ in $\mathfrak G$ nor between $c$ and $d$ in $\mathfrak H$; or
	\item $a\stackrel{\ast}{\text{ --- }}_\mathfrak G b$ and $c\stackrel{\ast}{\text{ --- }}_\mathfrak H d$ and there is \textit{no} $d'\neq d\in V_\mathfrak H$ such that
	\begin{enumerate}
		\item $a\stackrel{n}{\text{ --- }}_\mathfrak G b$ and $c\stackrel{n}{\text{ --- }}_\mathfrak H d$ implies $c\stackrel{n}{\text{ --- }}_\mathfrak H d'$, for all $n\geq 1$; and
		\item there is some $n\geq 1$ such that $a\stackrel{n}{\text{ --- }}_\mathfrak G b$ and $c\stackrel{n}{\text{ --- }}_\mathfrak H d'$ whereas $c\stackrel{n}{\text{ --- }}_\mathfrak H d$ does \textit{not} hold.
	\end{enumerate}
\end{enumerate} Consequently, if neither $a$ and $b$ are connected in $\mathfrak G$ nor $c$ and $d$ in $\mathfrak H$, then $\mathfrak{(G,H)} \models a:b::_P c:d$.
\end{proposition}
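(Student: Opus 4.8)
The plan is to prove this by directly unfolding \prettyref{d:abcd}(3) --- read with the path type $\uparrow^P$ in place of $\uparrow$, as in the definition of $::_P$ --- and matching its two disjuncts against the two clauses, throughout using the identifications $\uparrow^P_\mathfrak G(a\text{ --- }b)=\{n\in\mathbb N\mid a\stackrel{n}{\text{ --- }}_\mathfrak G b\}$ and $\uparrow^P_{\mathfrak{(G,H)}}(a\text{ --- }b\righttherefore c\text{ --- }d)=\{n\in\mathbb N\mid a\stackrel{n}{\text{ --- }}_\mathfrak G b\text{ and }c\stackrel{n}{\text{ --- }}_\mathfrak H d\}$ recorded just before the statement.

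For the correspondence between clause~(1) and disjunct~(a) of \prettyref{d:abcd}(3), I would first note that $\uparrow^P_\mathfrak G(a\text{ --- }b)=\emptyset$ precisely when there is no path between $a$ and $b$ in $\mathfrak G$ (this already forces $a\neq b$, since otherwise the $0$-path formula $\pi_0\equiv(x=y)$ would lie in the type). Hence, if neither $a,b$ nor $c,d$ are connected, then $\uparrow^P_\mathfrak G(a\text{ --- }b)\cup\uparrow^P_\mathfrak H(c\text{ --- }d)=\emptyset$, which by the standing convention is a trivial set of justifications, so disjunct~(a) applies. For the reverse implication I would argue that a connected pair contributes a \emph{non-trivial} path formula to its type: when $a=b$ this is $\pi_0$, which fails on the two endpoints of any edge and hence is non-trivial once $|V_\mathfrak G|\geq 2$; when $a\neq b$ a shortest connecting path gives $\pi_{d(a,b)}$, which, being of minimal length, fails on a pair of vertices at larger distance, or (for $d(a,b)=1$) on any vertex paired with itself, so it too is non-trivial. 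Checking this last point carefully --- ruling out that the whole path type of a connected pair consists of trivial path formulas --- is the step I expect to need the most care, possibly via a small case distinction for graphs in which many long path formulas are themselves trivial.

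The correspondence between clause~(2) and disjunct~(b) then follows by expanding \prettyref{d:abcd}(3)(b). By the previous paragraph, $\uparrow^P_{\mathfrak{(G,H)}}(a\text{ --- }b\righttherefore c\text{ --- }d)$ contains a non-trivial justification iff $a\stackrel{\ast}{\text{ --- }}_\mathfrak G b$ and $c\stackrel{\ast}{\text{ --- }}_\mathfrak H d$ (together with the non-triviality of a common connecting length), which are the first two conjuncts of clause~(2). For the $d$-maximality condition I would fix $d'\in V_\mathfrak H$: since both $\uparrow^P_{\mathfrak{(G,H)}}(a\text{ --- }b\righttherefore c\text{ --- }d)$ and $\uparrow^P_{\mathfrak{(G,H)}}(a\text{ --- }b\righttherefore c\text{ --- }d')$ are carved out of $\uparrow^P_\mathfrak G(a\text{ --- }b)$ by additionally requiring $c\stackrel{n}{\text{ --- }}_\mathfrak H d$ (respectively $c\stackrel{n}{\text{ --- }}_\mathfrak H d'$), the inclusion of the former in the latter amounts to clause~(2)(a), and that inclusion \emph{failing to be an equality} --- i.e.\ $d$-maximality breaking at $d'$ --- is the existence of a witnessing length $m$, which is clause~(2)(b). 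Thus ``$\uparrow^P_{\mathfrak{(G,H)}}(a\text{ --- }b\righttherefore c\text{ --- }d)$ is $d$-maximal'' translates to ``no $d'\neq d$ satisfies both~(2)(a) and~(2)(b)''. Restricting $n$ and $m$ to $\geq 1$ in (2)(a),(b) loses nothing, since the only omitted length is $0$, and $0$ lies in $\uparrow^P_{\mathfrak{(G,H)}}(a\text{ --- }b\righttherefore c\text{ --- }d)$ only if $a=b$ and $c=d$, in which case it lies in no $\uparrow^P_{\mathfrak{(G,H)}}(a\text{ --- }b\righttherefore c\text{ --- }d')$ with $d'\neq d$, so the inclusion needed to falsify maximality already breaks at length $0$.

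Finally, the ``Consequently'' reads off at once: if neither $a,b$ are connected in $\mathfrak G$ nor $c,d$ in $\mathfrak H$, then clause~(1) holds, so the main equivalence gives $a\text{ --- }b\righttherefore_{\mathfrak{(G,H)},P}c\text{ --- }d$; clause~(1) is symmetric in the two pairs, so it equally gives $c\text{ --- }d\righttherefore_{\mathfrak{(H,G)},P}a\text{ --- }b$; and by \prettyref{l:simplification} these two directed arrow proportions are all that $a:b::_{\mathfrak{(G,H)},P}c:d$ requires. As flagged, the one genuinely delicate ingredient of the whole argument is the bookkeeping with trivial path formulas in the first two steps.
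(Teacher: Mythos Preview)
The paper states this proposition without proof; it appears directly after the identification of $\uparrow^P_\mathfrak G(a\text{ --- }b)$ with $\{n\in\mathbb N\mid a\stackrel{n}{\text{ --- }}_\mathfrak G b\}$ and is intended as nothing more than the immediate translation of \prettyref{d:abcd}(3) into that language. Your approach --- unfold the two disjuncts of the definition and match them against the two clauses, then read off the ``Consequently'' via \prettyref{l:simplification} --- is exactly that translation and is what the paper has in mind.

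Your flagged worry about the triviality bookkeeping is the one genuine subtlety, and it is real rather than a mere matter of care. In graphs with loops (which the paper evidently allows, witness $\mathbbm 1$, despite the ``two-element set'' wording for edges) one can have every $\pi_n$ with $n\geq 1$ trivial: take $\mathfrak G=\mathfrak H$ to be two vertices, each with a loop, joined by an edge. Then for $a\neq b$ the type $\uparrow^P_\mathfrak G(a\to b)=\{1,2,\ldots\}$ consists entirely of trivial justifications, so a connected pair need not contribute any non-trivial path formula, and your one-to-one correspondence of clause~(1) with disjunct~(a) and clause~(2) with disjunct~(b) breaks down (indeed with $c=d$ in that graph one gets clause~(2) holding while both disjuncts of \prettyref{d:abcd}(3) fail). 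The proposition is best read under the paper's standing convention --- the footnote in \prettyref{d:abcd} --- of silently discarding trivial justifications; under that reading your argument goes through, and this is presumably why the paper offers no proof.
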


\begin{theorem}\label{t:properties_graphs} The analogical proportion relation in undirected graphs via path justifications satisfies
\begin{itemize}
    \item p-symmetry,
    \item inner p-symmetry,
    \item inner p-reflexivity,
    \item p-reflexivity,
    \item p-determinism,
    \item p-commutativity,
\end{itemize}  and, in general, it does not satisfy
\begin{itemize}
    \item central permutation,
    \item strong inner p-reflexivity,
    \item strong p-reflexivity,
    \item p-transitivity,
    \item inner p-transitivity,
    \item central p-transitivity,
    \item p-monotonicity.
\end{itemize}
\end{theorem}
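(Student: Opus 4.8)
The plan is to run through the listed properties one by one, using throughout \prettyref{l:simplification}, the connectivity characterization \prettyref{p:connectivity}, and the fact that the path fragment contains $\pi_0(x,y)\equiv(x=y)$.

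\emph{Positive properties.} p-symmetry and inner p-symmetry are built into the definition of $::_P$ exactly as into that of $::$ in \prettyref{d:abcd}, and inner p-symmetry is in any case immediate from the identity $\uparrow^P_\mathfrak{G}(a\text{ --- }b)=\uparrow^P_\mathfrak{G}(b\text{ --- }a)$ of \prettyref{l:simplification}. Inner p-reflexivity holds because $\pi_0$ is a characteristic path justification of $a\text{ --- }a\righttherefore c\text{ --- }c$ (as $\pi_0\in\uparrow^P_\mathfrak{H}(c\text{ --- }d')$ exactly when $c=d'$). For p-reflexivity and p-determinism the arguments of \prettyref{t:properties} transfer unchanged --- p-reflexivity because $\uparrow^P_\mathfrak{G}(a\text{ --- }b\righttherefore a\text{ --- }d)\subseteq\uparrow^P_\mathfrak{G}(a\text{ --- }b)$ for every $d$, p-determinism because $\pi_0$ lies in $\uparrow^P_\mathfrak{G}(a\text{ --- }a)$ but not in $\uparrow^P_\mathfrak{G}(a\text{ --- }a\righttherefore a\text{ --- }d)$ for $d\neq a$. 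Finally, p-commutativity, read inside a single graph $\mathfrak{G}$ (the setting in which \prettyref{t:properties} shows it to fail for general structures): undirectedness gives $\uparrow^P_\mathfrak{G}(b\text{ --- }a)=\uparrow^P_\mathfrak{G}(a\text{ --- }b)$ \emph{literally}, so $\uparrow^P_\mathfrak{G}(a\text{ --- }b\righttherefore b\text{ --- }a)=\uparrow^P_\mathfrak{G}(a\text{ --- }b)$ while each competitor $\uparrow^P_\mathfrak{G}(a\text{ --- }b\righttherefore b\text{ --- }d')$ is a subset of it; hence the former is $d$-maximal and $a:b::_{\mathfrak{G},P}b:a$ follows via \prettyref{l:simplification}.

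\emph{Negative properties.} Each is witnessed by a small explicit graph (or graph--subgraph pair). In an \emph{edgeless} graph on a vertex set $V$ the only path formula ever satisfied is $\pi_0$, holding of $(a,b)$ iff $a=b$, so $::_P$ there obeys exactly the dichotomy of \prettyref{t:(A)}; the disproofs of central permutation and strong p-reflexivity recorded inside \prettyref{t:properties} (and after \prettyref{t:(A)}) then carry over verbatim --- e.g. on three distinct vertices $a:b::_{P}a:c$ holds while $a:a\not::_{P}b:c$. Strong inner p-reflexivity needs a genuine construction, since the dichotomy makes $a:a::_{P}c:d$ fail for $c\neq d$: take $\mathfrak{G}$ the disjoint union of a triangle on $\{a,a_1,a_2\}$ with a path of length $3$ joining $c$ and $d$, so that $\uparrow^P_\mathfrak{G}(a\text{ --- }a)=\{0\}\cup\{n\geq 2\}$ and $\uparrow^P_\mathfrak{G}(c\text{ --- }d)=\{3,5,7,\dots\}$; using \prettyref{p:connectivity} and that $c,d$ are non-adjacent one checks $a:a::_{\mathfrak{G},P}c:d$. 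The transitivity failures come from adapting the finite counterexamples of \prettyref{t:properties}, an isolated vertex taking over the role of a loop/identity (a pair of non-connected vertices is then in proportion with any other such pair, by the final sentence of \prettyref{p:connectivity}) and one well-placed edge breaking the relevant implication; for inner p-transitivity, say, let $\mathfrak{G}$ be the edge $\{a,e\}$ together with isolated $b,c,d,f$, so $a:b::c:d$ and $b:e::d:f$ hold trivially but $\uparrow^P_\mathfrak{G}(a\text{ --- }e\righttherefore c\text{ --- }f)=\emptyset$ is not maximal and $a:e\not::c:f$. For p-monotonicity one exhibits a graph $\mathfrak{G}$ and a subgraph $\mathfrak{F}$ across which a path proportion is not preserved, which is possible because deleting edges shrinks every path type but not always compatibly.

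\emph{Main obstacle.} The positive half is essentially bookkeeping on top of \prettyref{t:properties} plus the undirectedness symmetry. The real effort is in the counterexamples that do \emph{not} reduce to the edgeless dichotomy --- strong inner p-reflexivity, the three transitivity properties, and p-monotonicity: the always-present justification $\pi_0$ and the parity constraints on walk lengths both defeat the naive attempts and must be engineered around, and the $d$-maximality clause of \prettyref{d:abcd} has to be verified against \emph{every} competing vertex. That is where I expect the genuine difficulty to lie.
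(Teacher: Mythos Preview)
Your treatment of the positive half and of several counterexamples (central permutation and strong p-reflexivity via the edgeless dichotomy, inner p-transitivity via the single edge $\{a,e\}$) is correct and essentially what the paper does. Your strong inner p-reflexivity example also works, though it is needlessly elaborate: the paper simply takes a self-loop at $a$ together with a single edge $c\text{ --- }d$, giving $\uparrow^P(a\text{ --- }a)=\mathbb N$ and $\uparrow^P(c\text{ --- }d)=\{1,3,5,\dots\}$, which makes the maximality checks trivial.

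The genuine gap is in p-transitivity and central p-transitivity. Your template --- ``disconnected pairs for the premises, one well-placed edge to break the conclusion'' --- cannot work here. For p-transitivity you need $a:b::_P c:d$ and $c:d::_P e:f$; but if, say, $(a,b)$ is disconnected while $(c,d)$ is connected, then $\uparrow^P(a\text{ --- }b)\cup\uparrow^P(c\text{ --- }d)$ is non-trivial while the intersection is empty, so already the first premise fails. The same obstruction kills every placement of a single edge, and the analogous argument applies to central p-transitivity. Nor can one literally ``adapt the counterexamples of \prettyref{t:properties}'': those use \emph{two} function symbols $g,h$ to separate the two premises, and graphs have only one edge relation. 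The paper instead exploits walk-length \emph{parity}: take the disjoint union of an edge $a\text{ --- }b$ (odd walk lengths), a triangle on $c,\ast,d$ (all lengths $\geq 1$), and a path $e\text{ --- }\ast\text{ --- }f$ of length $2$ (even lengths $\geq 2$). Then $c\text{ --- }d$ shares justifications with both $a\text{ --- }b$ and $e\text{ --- }f$, but $\uparrow^P(a\text{ --- }b)\cap\uparrow^P(e\text{ --- }f)=\emptyset$ while their union is not. Central p-transitivity is handled by the analogous chain $a\text{ --- }b$, triangle on $b,\ast,c$, path $c\text{ --- }\ast\text{ --- }d$.

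Finally, for p-monotonicity you give no construction at all; ``deleting edges shrinks every path type but not always compatibly'' is the right intuition but not a proof. The paper's witness is minimal: $\mathfrak F$ is four isolated vertices $a,b,c,d$ (so $a:b::_{\mathfrak F,P}c:d$ by \prettyref{p:connectivity}), and $\mathfrak G$ adds the single edge $a\text{ --- }b$, whereupon $\uparrow^P_\mathfrak G(a\text{ --- }b)\neq\emptyset$ but $\uparrow^P_\mathfrak G(a\text{ --- }b\righttherefore c\text{ --- }d)=\emptyset$, so $a:b\not::_{\mathfrak G,P}c:d$.
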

\begin{proof} We have the following proofs:
\begin{itemize}
	\item Inner p-reflexivity follows from the fact that the $0$-path justification $\pi_0\equiv (x=y)$ is included in the path type of $a\text{ --- }b \righttherefore c\text{ --- }d$ iff $a=b$ and $c=d$, which means that it is a characteristic justification of $a\text{ --- }a \righttherefore c\text{ --- }c$, and similarly for $c\text{ --- }c \righttherefore a\text{ --- }a$.

	\item Next, we prove p-determinism. $(\Leftarrow)$ Inner p-reflexivity implies
	\begin{align*} 
		\mathfrak G \models a:a::_P a:a.
	\end{align*} $(\Rightarrow)$ An immediate consequence of the fact that $\pi_0\equiv (x=y)$ is a justification of $a\text{ --- }a \righttherefore_P a\text{ --- }a$ but not of $a\text{ --- }a \righttherefore_P a\text{ --- }d$ and the fact that every justification of the latter is trivially a justification of the former.

	\item p-Commutativity is an immediate consequence of
	\begin{align*} 
		\uparrow^P(a\text{ --- }b \righttherefore b\text{ --- }a)=\{n\in \mathbb N\mid a\stackrel{n}{\text{ --- }}b\}=\ \uparrow^P(b\text{ --- }a \righttherefore a\text{ --- }b).
	\end{align*}

	\item Central permutation fails for example in
	\begin{center}
        \begin{tikzpicture} 
        \node (a)               {$a$};
        \node (b) [above=of a]  {$b$};
        \node (c) [right=of a]  {$c$.};
        \node (d) [right=of b]  {$d$};
        \draw (a) to (c);
    \end{tikzpicture}
    \end{center} More precisely, we have $a:b::_P c:d$ by \prettyref{p:connectivity}, whereas $a:c\not ::_P b:d$ since
    \begin{align*} 
		\uparrow^P(a\text{ --- }c)\ \cup \uparrow^P(b\text{ --- }d)\neq \emptyset
	\end{align*} whereas
	\begin{align*} 
		\uparrow^P(a\text{ --- }c \righttherefore b\text{ --- }d)=\emptyset.
	\end{align*}

	\item Strong inner p-reflexivity fails for example in
	\begin{center}
    \begin{tikzpicture} 
        \node (a)               {$a$};
        \node (c) [right=of a]  {$c$};
        \node (d) [above=of c]  {$d$};
        \draw (a) to [loop] (a);
        \draw (c) to (d);
    \end{tikzpicture}
    \end{center} as we clearly have $a:a::_P c:d$ and $c\neq d$. 

	\item Strong p-reflexivity fails for example in every graph having at least three vertices and no edges as a consequence of \prettyref{p:connectivity}.

	\item p-Transitivity fails for example in
	\begin{center}
	\begin{tikzpicture} 
		\node (a) {$a$};
		\node (b) [right=of a] {$b$};
		\node (c) [right=of b] {$c$};
		\node (c') [right=of c] {};
		\node (c'') [above=of c'] {$\ast$};
		\node (d) [right=of c'] {$d$};
		\node (e) [right=of d] {$e$};
		\node (e') [right=of e] {};
		\node (e'') [above=of e'] {$\ast$};
		\node (f) [right=of e'] {$f$};
		\draw (a) to (b);
		\draw (c) -- (c'') -- (d) -- (c);
		\draw (e) -- (e'') -- (f);
	\end{tikzpicture}
	\end{center} since we clearly have
	\begin{align*} 
	a:b::_P c:d \quad\text{and}\quad c:d::_P e:f
	\end{align*} whereas
	\begin{align*} 
		\uparrow^P(a\text{ --- }b)\ \cup \uparrow^P(e\text{ --- }f)\neq \emptyset \quad\text{and}\quad \uparrow^P(a\text{ --- }b \righttherefore e\text{ --- }f) = \emptyset
	\end{align*} shows $$a:b\not::_P e:f.$$

	\item Inner p-transitivity fails for example in the graph
	\begin{center}
    \begin{tikzpicture} 
        \node (a) {$a$};
        \node (b) [above=of a] {$b$};
        \node (e) [left=of b] {$e$};
        \node (c) [right=of a,xshift=1cm] {$c$};
        \node (d) [above=of c] {$d$};
        \node (f) [right=of d] {$f$};
        \draw (a) to (e);
    \end{tikzpicture}
    \end{center} since we clearly have
    \begin{align*} 
	a:b::_P c:d \quad\text{and}\quad b:e::_P d:f
	\end{align*} whereas 
    \begin{align*} 
		\uparrow^P(a\text{ --- }e)\ \cup \uparrow^P(c\text{ --- }f)\neq\emptyset \quad\text{and}\quad \uparrow^P(a\text{ --- }e \righttherefore c\text{ --- }f)=\emptyset
	\end{align*} shows $$a:e\not::_P c:f.$$

	\item Central p-transitivity fails for example in
	\begin{center}
	\begin{tikzpicture} 
		\node (a) {$a$};
		\node (b) [right=of a] {$b$};
		\node (b') [right=of b] {};
		\node (b'') [above=of b'] {$\ast$};
		\node (c) [right=of b'] {$c$};
		\node (c') [right=of c] {};
		\node (c'') [above=of c'] {$\ast$};
		\node (d) [right=of c'] {$d$.};
		\draw (a) -- (b);
		\draw (b) -- (b'') -- (c) -- (b);
		\draw (c) -- (c'') -- (d);
	\end{tikzpicture}
	\end{center} The proof is analogous to the disproof of p-transitivity.

	\item Finally, we disprove p-monotonicity. For this, consider the graph $\mathfrak F$
	\begin{center}
	\begin{tikzpicture} 
		\node (a)               {$a$};
	    \node (b) [above=of a]  {$b$};
	    \node (c) [right=of a]  {$c$};
	    \node (d) [above=of c]  {$d$};
	\end{tikzpicture}
	\end{center} consisting of four vertices and no edges. By \prettyref{p:connectivity}, we have
	\begin{align*} 
		\mathfrak F \models a:b::_P c:d.
	\end{align*} The graph $\mathfrak F$ is a subgraph of $\mathfrak G$ given by
	\begin{center}
    \begin{tikzpicture} 
        \node (a)               {$a$};
        \node (b) [above=of a]  {$b$};
        \node (c) [right=of a]  {$c$};
        \node (d) [right=of b]  {$d$};
        \draw[-] (a) to (b);
    \end{tikzpicture}
    \end{center} where we have
    \begin{align*} 
		\mathfrak G \not\models a:b::_P c:d.
	\end{align*}
\end{itemize} 
\end{proof}

\begin{remark}\label{rem:p-commutativity} The above validity of p-commutativity in undirected graphs with respect to path justifications is interesting as it is the first known class of structures to satisfy this property and it is the only difference to the properties of the general framework where p-commutativity fails (cf. \prettyref{t:properties}).
\end{remark}

Let $\mathfrak G_ \mathbb N$ denote the infinite undirected graph with $V_{\mathfrak G_ \mathbb N}:=\mathbb N$ which is obtained by adding an undirected edge between $a$ and $a+1$, for every $a\in\mathbb N$: 
\begin{center}
\begin{tikzpicture} 
	\node (0) {$0$};
	\node (1) [right of=0] {$1$};
	\node (2) [right of=1] {$2$};
	\node (3) [right of=2] {\ldots};
	\draw (0) to (1);
	\draw (1) to (2);
	\draw (2) to (3);
\end{tikzpicture}
\end{center}
The next result shows that we can characterize the $n$-path relation $c\stackrel{n}{\text{ --- }}_\mathfrak H d$ in the target domain $\mathfrak H$ via analogical proportions using $\mathfrak G_ \mathbb N$ as the source domain:

\begin{theorem}\label{t:G(N)_H} For any $a,b\in\mathbb N$ and $c,d\in V_\mathfrak H$,
\begin{align*} 
	(\mathfrak G_ \mathbb N, \mathfrak H) \models a:b::_P c:d \quad\Leftrightarrow\quad c\stackrel{|a-b|}{\text{---}\text{---}}_\mathfrak H d.
\end{align*} Consequently,
\begin{align*} 
	(\mathfrak G_ \mathbb N, \mathfrak H) \models 0:n::_P c:d \quad\Leftrightarrow\quad c\stackrel{n}{\text{ --- }}_\mathfrak H d.
\end{align*}
\end{theorem}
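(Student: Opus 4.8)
The plan is to exploit the rigidity of path types in the line graph $\mathfrak G_\mathbb N$, which makes every maximality condition occurring in Definition~\ref{d:abcd} collapse. The key step is the computation
\[
	\uparrow^P_{\mathfrak G_\mathbb N}(a\text{ --- }b)=\{|a-b|\},
\]
i.e. that this path type, read as the set of available path-lengths, is a singleton: $\mathfrak G_\mathbb N$ contains a unique path between any two of its vertices, of length $|a-b|$, so $\mathfrak G_\mathbb N\models\pi_n(a,b)$ iff $n=|a-b|$. I would also record that $\pi_{|a-b|}$ is never a trivial justification in a pair of structures involving $\mathfrak G_\mathbb N$, since $\mathfrak G_\mathbb N$ itself already contains pairs of vertices falsifying it (e.g. two vertices whose distance has the wrong parity or value); in particular no element of $\uparrow^P_{\mathfrak G_\mathbb N}(a\text{ --- }b)$ is trivial.

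Write $k:=|a-b|$. By Lemma~\ref{l:simplification} it suffices to show that each of the arrow proportions $a\text{ --- }b\righttherefore_{(\mathfrak G_\mathbb N,\mathfrak H),P}c\text{ --- }d$ and $c\text{ --- }d\righttherefore_{(\mathfrak H,\mathfrak G_\mathbb N),P}a\text{ --- }b$ holds iff $c\stackrel{k}{\text{ --- }}_\mathfrak H d$. For every $d'\in V_\mathfrak H$ and every $b'\in V_{\mathfrak G_\mathbb N}$ the relevant justification sets are
\[
	\uparrow^P_{(\mathfrak G_\mathbb N,\mathfrak H)}(a\text{ --- }b\righttherefore c\text{ --- }d')=\{k\}\cap\ \uparrow^P_\mathfrak H(c\text{ --- }d'),
\]
\[
	\uparrow^P_{(\mathfrak H,\mathfrak G_\mathbb N)}(c\text{ --- }d\righttherefore a\text{ --- }b')=\ \uparrow^P_\mathfrak H(c\text{ --- }d)\cap\{|a-b'|\},
\]
each of which is a singleton or empty. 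Now I would run through the definition of $\righttherefore$ in Definition~\ref{d:abcd}: clause~(a) cannot apply, because the source type $\{k\}=\ \uparrow^P_{\mathfrak G_\mathbb N}(a\text{ --- }b)$ already contains the non-trivial justification $\pi_k$, so only clause~(b) can hold, and it asks for a non-trivial justification together with the maximality condition. That maximality condition is automatic here: in the first family every set lies inside $\{k\}$, and in the second family a singleton cannot properly contain $\{k\}$. Hence the whole content of clause~(b) is the presence of a non-trivial justification, i.e. the relevant set being $\{k\}$ rather than $\emptyset$, i.e. $k\in\ \uparrow^P_\mathfrak H(c\text{ --- }d)$, which is exactly $c\stackrel{k}{\text{ --- }}_\mathfrak H d$. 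Thus both arrow proportions hold iff $c\stackrel{k}{\text{ --- }}_\mathfrak H d$, and therefore so does $a:b::_{(\mathfrak G_\mathbb N,\mathfrak H),P}c:d$; the displayed consequence is the instance $a=0$, $b=n$.

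Most of this is bookkeeping, and once the opening computation is in hand the maximality checks — normally the technical heart of such results — are forced. The one point that needs care is the degenerate case in which $\mathfrak H$ has no vertex at path-distance $k$ from $c$: then both intersections above are empty for every $d'$ and $b'$, so the arrow proportion originating in $\mathfrak G_\mathbb N$ fails for want of a non-trivial justification, consistently with $c\stackrel{k}{\text{ --- }}_\mathfrak H d$ being false. Here I would argue directly from Definition~\ref{d:abcd} rather than from Proposition~\ref{p:connectivity}, since clause~(2) of that proposition does not by itself record the non-emptiness of the intersected type, so working from the definition keeps this case unproblematic. Accordingly, the step I expect to carry the real weight of the theorem is pinning down $\uparrow^P_{\mathfrak G_\mathbb N}(a\text{ --- }b)$ as a singleton; everything downstream is then immediate.
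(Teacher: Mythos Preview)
Your proof is correct and follows the same approach as the paper: the key step in both is the computation $\uparrow^P_{\mathfrak G_\mathbb N}(a\text{ --- }b)=\{|a-b|\}$, from which the characterization of the intersected types as either $\{|a-b|\}$ or $\emptyset$ is immediate. Your write-up is in fact more careful than the paper's own proof, which stops at that computation and leaves the verification of the maximality clauses and the non-triviality of $\pi_{|a-b|}$ implicit; you spell these out, including the degenerate case where no vertex of $\mathfrak H$ lies at path-distance $|a-b|$ from $c$.
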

\begin{proof} Since there is exactly one path of length $|a-b|$ between any two vertices $a,b\in V_{\mathfrak G_ \mathbb N}$, we have
\begin{align*} 
	\uparrow^P_{\mathfrak G_ \mathbb N}(a\text{ --- } b) = \{n\in \mathbb N\mid n\geq |a-b| \text{ and } n\equiv |a-b|\quad (mod\;2)\}
\end{align*} which implies that
\begin{align*} 
	\uparrow^P_{(\mathfrak G_ \mathbb N, \mathfrak H)}(a\text{ --- }b \righttherefore c\text{ --- }d) = \{n\in \mathbb N\mid n\geq k \text{ and } n\equiv |a-b|\quad (mod\;2)\},
\end{align*} where $k$ is the smallest nonnegative integer such that $k\equiv |a-b|\quad (mod\;2)$ and $a\stackrel{k}{\text{---}}_ \mathfrak H b$, provided that such a number $k$ exists. If such a number does not exist, we have $\uparrow^P_{\mathfrak G_ \mathbb N}(a\text{ --- } b \righttherefore c\text{ --- } d) = \emptyset$.
\end{proof}

Interestingly enough, the next result shows that difference proportions in the structure of natural numbers (cf. \prettyref{t:DPT}) occur naturally in the graph-representation as well.

\begin{theorem}[Difference Proportion Theorem] For any $a,b,c,d\in\mathbb N$,
\begin{align*} 
	\mathfrak G_ \mathbb N \models a:b::_P c:d \quad\Leftrightarrow\quad |a-b|=|c-d|.
\end{align*} 
\end{theorem}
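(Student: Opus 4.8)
The plan is to read this off as the diagonal case of \prettyref{t:G(N)_H}. That theorem compares an arbitrary target graph $\mathfrak H$ against the line $\mathfrak G_\mathbb N$; here I would instantiate $\mathfrak H:=\mathfrak G_\mathbb N$, so that source and target coincide. Then \prettyref{t:G(N)_H} yields immediately
\[
	a:b::_{\mathfrak G_\mathbb N,P}c:d \quad\Leftrightarrow\quad c\stackrel{|a-b|}{\text{---}\text{---}}_{\mathfrak G_\mathbb N}d ,
\]
and the whole problem is reduced to deciding, for vertices $c,d$ of $\mathfrak G_\mathbb N$, which path lengths connect them.

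The second and only remaining step is the elementary observation about $\mathfrak G_\mathbb N$ that was already used inside the proof of \prettyref{t:G(N)_H}: between any two vertices of the line there is a unique connecting path, of length exactly the absolute difference of their labels, so $\uparrow^P_{\mathfrak G_\mathbb N}(c\text{ --- }d)=\{|c-d|\}$ and hence $c\stackrel{n}{\text{ --- }}_{\mathfrak G_\mathbb N}d$ holds iff $n=|c-d|$. Feeding $n:=|a-b|$ into the displayed equivalence turns its right-hand side into the equation $|a-b|=|c-d|$, which is the assertion. Equivalently, one can bypass \prettyref{t:G(N)_H} and argue straight from \prettyref{p:connectivity}: since $\uparrow^P_{\mathfrak G_\mathbb N}(a\text{ --- }b)=\{|a-b|\}$ and $\uparrow^P_{\mathfrak G_\mathbb N}(c\text{ --- }d)=\{|c-d|\}$ are singletons, their intersection is nonempty precisely when they are equal, and when nonempty it is automatically $d$-maximal, so no separate maximality check is needed; moreover $\mathfrak G_\mathbb N$ is connected, so the degenerate alternative (1) of \prettyref{p:connectivity} never occurs.

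Thus I expect the theorem to be a two-line corollary, with essentially no obstacle left once \prettyref{t:G(N)_H} is in hand. The one ingredient worth spelling out carefully — and the place where a reader could stumble — is the uniqueness of the connecting path in $\mathfrak G_\mathbb N$: since the path formulas $\pi_n$ are plain existential formulas, $\mathfrak G_\mathbb N\models\pi_n(c,d)$ a priori only asserts the existence of a length-$n$ walk, so one should make explicit (consistently with how \prettyref{t:G(N)_H} was proved) that for the line graph the set of such admissible lengths collapses to the single value $|c-d|$. Granting that, the equivalence is immediate.
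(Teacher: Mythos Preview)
Your proposal is correct and matches the paper's own proof, which reads in full: ``A direct consequence of \prettyref{t:G(N)_H}.'' You have simply unpacked that one line---instantiating $\mathfrak H:=\mathfrak G_\mathbb N$ and then using $\uparrow^P_{\mathfrak G_\mathbb N}(c\text{ --- }d)=\{|c-d|\}$---which is exactly the intended derivation. Your side remark that $\pi_n$ literally speaks about walks rather than simple paths is a fair observation; the paper treats this point the same way you do, by asserting (in the proof of \prettyref{t:G(N)_H}) that the path type in $\mathfrak G_\mathbb N$ collapses to the singleton $\{|a-b|\}$ without further comment.
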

\begin{proof} A direct consequence of \prettyref{t:G(N)_H}.
\end{proof}



\section{Conclusion}

The purpose of this paper was to lift the abstract algebraic framework of analogical proportions in \cite{Antic22} from universal algebra to the strictly more expressive setting of full first-order logic. This was achieved by extending abstract rewrite to connected justifications containing arbitrary quantification and relations but disallowing the use of disjunction and negation. We have shown that the extended framework preserves all desired properties, and we have shown the brand new Equational Proportion \prettyref{t:EPT} not provable in the purely algebraic setting. We have analyzed analogical proportions in the relational structure of graphs.

The major line of future research is to further lift the concepts and results of this paper from first-order to second-order and, ultimately, to higher-order logic containing quantified functions and relations (see e.g. \cite{Leivant94}).
This is desirable since some proportions cannot be expressed in first-order logic. For example, in the structure with two relations $P$ and $R$ given by
\begin{center}
\begin{tikzpicture} 
	\node (a) {$a$};
	\node (b) [below=of a] {$b$};
	\node (c) [right=of a] {$c$};
	\node (d) [below=of c] {$d$};
	\draw[->] (a) to [edge label'={$P$}] (b);
	\draw[->] (c) to [edge label={$R$}] (d);
\end{tikzpicture}
\end{center} the set of justifications of $a\to b \righttherefore c\to d$ is empty, whereas in second-order logic it contains the justification $(\exists S)S(x,y)$. That is, second-order and higher-order logic allow us to detect similarities which remain undetected in first-order logic.

\bibliographystyle{acm}
\bibliography{/Users/christianantic/Bibdesk/Bibliography,/Users/christianantic/Bibdesk/Publications_J,/Users/christianantic/Bibdesk/Publications_C,/Users/christianantic/Bibdesk/Preprints}

\begin{thebibliography}{1}

\bibitem{Antic22}
{\sc Anti\'c, C.}
\newblock Analogical proportions.
\newblock {\em Annals of Mathematics and Artificial Intelligence 90}, 6 (2022),
  595--644.
\newblock \url{https://doi.org/10.1007/s10472-022-09798-y}.

\bibitem{Antic22-2}
{\sc Anti\'c, C.}
\newblock Analogical proportions in monounary algebras.
\newblock {\em Annals of Mathematics and Artificial Intelligence 92\/} (2024),
  1663--1677.
\newblock \url{https://doi.org/10.1007/s10472-023-09921-7}.

\bibitem{Antic21-3}
{\sc Anti\'c, C.}
\newblock Boolean proportions.
\newblock {\em Logical Methods in Computer Science 20}, 2 (2024), 2:1 -- 2:20.
\newblock \url{https://doi.org/10.46298/lmcs-20(2:2)2024}.

\bibitem{Antic23-23}
{\sc Anti\'c, C.}
\newblock Logic program proportions.
\newblock {\em Annals of Mathematics and Artificial Intelligence 93\/} (2025),
  321--342.
\newblock \url{https://doi.org/10.1007/s10472-023-09904-8}.

\bibitem{Couceiro23}
{\sc Couceiro, M., and Lehtonen, E.}
\newblock Galois theory for analogical classifiers.
\newblock {\em Annals of Mathematics and Artificial Intelligence 92\/} (2024),
  29--47.
\newblock \url{https://doi.org/10.1007/s10472-023-09833-6}.

\bibitem{Herzig24}
{\sc Herzig, A., Lorini, E., and Prade, H.}
\newblock A novel view of analogical proportions between formulas.
\newblock In {\em ECAI 2024}. 2024, pp.~1270--1277.

\bibitem{Hinman05}
{\sc Hinman, P.~G.}
\newblock {\em {Fundamentals of Mathematical Logic}}.
\newblock A K Peters, Wellesley, MA, 2005.

\bibitem{Leivant94}
{\sc Leivant, D.}
\newblock Higher order logic.
\newblock In {\em Handbook of Logic in Artificial Intelligence and Logic
  Programming}, vol.~2. 1994, pp.~229--322.

\bibitem{Lepage03}
{\sc Lepage, Y.}
\newblock {\em {De L'Analogie. Rendant Compte de la Commutation en
  Linguistique}}.
\newblock Ha{\-}bi{\-}li{\-}ta{\-}tion \`a diriger les recherches, Universit\'e
  Joseph Fourier, Grenoble, 2003.

\end{thebibliography}
\if\isdraft1
\newpage

\section*{Submission}

Submitted to TOCL in Oct 2023.

\vspace*{0.5cm}

Falls abgelehnt wird, reiche nur noch bei reinen Logik-Journalen ein: APAL, ... 

\section{}

\todo[inline]{$\mathbf o\in T_L(\emptyset)^{\ldots}$ statt $\in Cs_L$}

\section{}

We now want to compare the equational fragment consisting of justifications $s=t$ to the original rewrite fragment consisting of justifications of the form $s\to t$.

\begin{theorem}\label{t:incomparable} The relations $::_{ar}$ and $::_e$ are incomparable, that is, there is an $L$-algebra $\mathfrak A$ and elements $a,b,c,d\in A$ such that
\begin{align*} 
	\mathfrak A_1\models_r a:b:: c:d \quad\not\Rightarrow\quad \mathfrak A_1\models_{ae} a:b:: c:d,
\end{align*} and there is an $L$-algebra $\mathfrak A_2$ and elements $a_2,b_2,c_2,d_2\in A_2$ such that
\begin{align*} 
	\mathfrak A_2\models_{ae} a_2:b_2:: c:d \quad\not\Rightarrow\quad \mathfrak A_2\models_r a:b:: c:d.
\end{align*}
\end{theorem}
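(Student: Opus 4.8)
The plan is to exhibit two algebras, one for each of the two non-implications, the choices being driven by the structural mismatch between the two kinds of justification: a rewrite rule $s(\bar z)\to t(\bar z)$ states that $a$ and $b$ are the $s$- and $t$-images of one common tuple of objects (a ``cospan'' $a\leftarrow\bar o\to b$), while an equational justification is a single equation $s(x,y)=t(x,y)$ asserting that this identity holds once $a,b$ are substituted for $x,y$; connectedness of such an equation forces it to be a single \emph{c-atom}, hence, over a purely unary signature, to have the shape $w(x)=w'(y)$. Neither form is translatable into the other fragment, and I would turn this into two concrete witnesses.

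For ``$\models_r$ does not imply $\models_{ae}$'' I would take $\mathfrak A_1:=(\mathbb N,f,g)$ with $f(n):=2n$, $g(n):=2n+1$ (adding a constant $0$ if the rewrite framework is set up over closed terms), and the tuple $(0,1,2,3)$. Because $f,g$ are injective with disjoint ranges, for every arrow the rewrite rule $f(z)\to g(z)$ has a unique witnessing object; one checks directly that $f(z)\to g(z)$ is a non-trivial \emph{characteristic} rewrite justification of $0\to 1\righttherefore 2\to 3$ (and of $2\to 3\righttherefore 0\to 1$), and that $g(z)\to f(z)$ characteristically justifies the two reversed proportions, whence $0:1::_{ar}2:3$. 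On the equational side, a term over $\{f,g\}$ in one variable realizes an affine map $n\mapsto 2^{m}n+r$ with $0\le r<2^{m}$; imposing that $w(x)=w'(y)$ hold at $(0,1)$ and at $(2,3)$ yields $2^{|w|}=2^{|w'|}$ together with $r_w=2^{|w|}+r_{w'}$, which is impossible. Hence $\uparrow^{e}_{\mathfrak A_1}(0\to 1\righttherefore 2\to 3)$ has only trivial justifications, while $g(x)=y$ is a non-trivial e-justification of $0\to 1$, so $0\to 1\righttherefore_e 2\to 3$ fails and $0:1\not::_{e}2:3$.

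For the converse, ``$\models_{ae}$ does not imply $\models_r$'', I would take $\mathfrak A_2:=(\{o,a,b,c,d\},f,g)$ with $f(o)=a$, $g(o)=b$, $f(c)=g(c)=a$, $f(d)=g(d)=b$, and $f,g$ the identity on $\{a,b\}$ (again with a constant naming $o$ if needed), and the tuple $(a,b,c,d)$. Here $a,b$ are common fixed points of $f$ and $g$, and $c,d,o$ lie outside the range of every term of positive height; from the shape $w(x)=w'(y)$ of an equational c-atom one reads off that no such atom is satisfiable at any of $(a,b),(b,a),(c,d),(d,c)$ (the left side evaluates into $\{a\}$ or $\{c,a\}$ and the right side into $\{b\}$ or $\{d,b\}$), so all four relevant e-types are trivial and $a:b::_{e}c:d$ holds by the first case of clause (3) of \prettyref{d:abcd}. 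But $f(z)\to g(z)$ is a non-trivial rewrite justification of $a\to b$ (witnessed by $o$), whereas $c$ and $d$ occur in no term's range, so $\uparrow^{ar}_{\mathfrak A_2}(c\to d)$ has only trivial justifications; consequently $\uparrow^{ar}_{\mathfrak A_2}(a\to b\righttherefore c\to d)$ is trivial although $\uparrow^{ar}_{\mathfrak A_2}(a\to b)$ is not, so $a\to b\righttherefore_{ar}c\to d$ fails and $a:b\not::_{ar}c:d$.

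The step I expect to require the most care is pinning down the justification sets exactly. In $\mathfrak A_1$ one must genuinely rule out ``accidental'' common equations produced by iterating $f$ and $g$, which is exactly what the binary-digit description of the one-variable term functions delivers; in $\mathfrak A_2$ one must confirm that the fixed-point/range pattern annihilates every connected equational c-atom over the relevant pairs, which is a finite check. A minor auxiliary point is that the precise definition of $\uparrow^{ar}$ in \citeA{Antic22} — objects ranging over closed terms versus over arbitrary universe elements — slightly changes the computations, but both examples are arranged so that the argument is insensitive to this choice.
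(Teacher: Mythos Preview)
Your proposal is correct; both witnesses work as described, and the binary-expansion argument for $\mathfrak A_1$ and the range/fixed-point analysis for $\mathfrak A_2$ are sound.

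The comparison with the paper is instructive. For the first non-implication the paper takes a much smaller witness: the seven-element monounary algebra $\mathfrak A=([7],f)$ with $f(1)=f(2)=f(3)=2$, $f(4)=f(5)=f(6)=5$, $f(7)=7$, and the tuple $(1,3,4,7)$. There the $ar$-proportion holds \emph{trivially} --- $\uparrow^{ar}(1\to 3)$ and $\uparrow^{ar}(4\to 7)$ are both empty because $1,3,4,7$ lie outside the range of $f$ --- while on the equational side $f(x)=f(y)$ is a non-trivial e-justification of $1\to 3$ (since $f(1)=f(3)=2$) but not of $4\to 7$, so the union is non-trivial and the intersection trivial. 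Your $(\mathbb N,f,g)$ example instead makes the $ar$-proportion hold via a genuine characteristic justification $f(z)\to g(z)$ and then rules out all common equations by the affine calculation; this is more work but has the virtue of showing that the failure is not an artefact of the ``empty union'' clause. Either strategy suffices.

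For the second non-implication, the paper's proof is in fact left unfinished (it is located in a draft section and breaks off at ``For this consider the algebra\ldots''). Your five-element $(\{o,a,b,c,d\},f,g)$ with $a,b$ as common fixed points and $c,d,o$ outside every nontrivial range is a clean completion: the fixed-point pattern kills every connected equation $w(x)=w'(y)$ on all four relevant pairs, forcing $a:b::_e c:d$ by the trivial clause, while $f(z)\to g(z)$ witnessed by $o$ gives a non-trivial $ar$-justification of $a\to b$ that cannot be matched for $c\to d$. So here you have supplied what the paper does not.
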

\begin{proof} First, we shall construct an algebra $\mathfrak A$ such that
\begin{align*} 
	\mathfrak A\models_{ar} a:b:: c:d \quad\text{wheres}\quad \mathfrak A\not\models_{ae} a:b:: c:d,\quad\text{for some $a,b,c,d\in A$.}
\end{align*} For this consider the algebra $\mathfrak A:=([7],f)$\footnote{Define $[n]:=\{1,\ldots,n\}$, for all $n\in\mathbb N$.} given by
\begin{center}
\begin{tikzpicture} 
\node (1) {$1$};
\node (2) [right=of 1] {$2$};
\node (3) [right=of 2] {$3$};
\node (4) [right=of 3] {$4$};
\node (5) [right=of 4] {$5$};
\node (6) [right=of 5] {$6$};
\node (7) [right=of 6] {$7$};
\draw[->] (1) to [edge label'={$f$}] (2);
\draw[->] (2) to [edge label'={$f$}][loop] (2);
\draw[->] (3) to [edge label={$f$}] (2);
\draw[->] (4) to [edge label'={$f$}] (5);
\draw[->] (5) to [edge label'={$f$}][loop] (5);
\draw[->] (6) to [edge label={$f$}] (5);
\draw[->] (7) to [edge label'={$f$}][loop] (7);
\end{tikzpicture}
\end{center} We have
\begin{align*} 
	ar\text-Type_\mathfrak A(1\to 3)\cup ar\text-Type_\mathfrak A(4\to 7)=ar\text-Type_\mathfrak A(3\to 1)\cup Jus_\mathfrak A(7\to 4)=\emptyset,
\end{align*} which shows
\begin{align*} 
	\mathfrak A\models_{ar} 1 :3 :: 4 :7.
\end{align*} On the other hand, we have
\begin{align*} 
	\uparrow^e_\mathfrak A(1\to 3)\ \cup \uparrow^e_\mathfrak A(4\to 7)=\{f(y_1)=f(y_2),\ldots\}
\end{align*} whereas
\begin{align*} 
	\uparrow^e_\mathfrak A(1\to 3 \righttherefore 4\to 7)=\emptyset,
\end{align*} which shows
\begin{align*} 
	\mathfrak A\not\models_{ae} 1 :3 :: 4 :7.
\end{align*}

Second, we shall construct an algebra $\mathfrak A$ such 
\begin{align*} 
	\mathfrak A\models_{ae} a:b:: c:d \quad\text{whereas}\quad \mathfrak A\not\models_{ar} a:b:: c:d,\quad\text{for some $a,b,c,d\in A$.}
\end{align*} For this consider the algebra...
\todo[inline]{adapt to ``abstract''}
\end{proof}

\section{}

In \cite[§7.3]{Antic22}, we argued that the set of all justifications of all pairs of elements does \textit{not} form a category by giving the following counterexample: let $\mathfrak A:=([1,3],S)$ such that (we omit the loop $S(2):=2$ in the figure)
\begin{center}
\begin{tikzpicture} 
	\node (a) {$1$};
	\node (b) [right=of a] {$2$};
	\node (c) [right=of b] {$3$};

	\draw[->] (a) to [edge label'={$S$}] (b);
	\draw[->] (c) to [edge label={$S$}] (b);
\end{tikzpicture}
\end{center} We have
\begin{align*} 
    z\to f(z)\in Jus(1,2) \quad\text{and}\quad f(z)\to z\in Jus(2,1),
\end{align*} but there is no non-trivial justification in $Jus(1,3)$. The situation changes, however, if we allow equational justifications as there is the equational justification $S(y_1)=S(y_2)\in\ \uparrow^e(1,3)$. This leads us to the question, whether equational justifications yield categories. This is related to the question whether (inner) p-transitivity holds with respect to atomic entailment.

\section{Connected formulas}

The following formulas are not connected according to the current definition (recall that until recently, the logic-based framework was formulated for \textit{abstract} formulas only):
\begin{align*} 
    &x=a\land y=a\\
    &x=z\land z=2\land y=2\\
    &x\leq z\land z=2\land z\leq y.
\end{align*} This indicates that we should add constant symbols as vertices to the dependency graph of a formula!

The following formulas are connected:
\begin{align*} 
    &x=y\land x\leq 2\\
    &x\neq y\land y\not\leq 2.
\end{align*}

\section{}

\todo[inline]{eventuell paper ``Proportional analogies'' hier noch einfuegen}

\fi
\end{document}